\def \doubleColumn {}
\newcommand{\Ic}{\mathcal{I}}
\begin{document}
\title{On the Effects of Subpacketization in Content-Centric Mobile Networks}

\author{ \IEEEauthorblockN{Adeel~Malik, Sung~Hoon~Lim, \emph{Member}, \emph{IEEE},\\ and Won-Yong Shin, \emph{Senior~Member}, \emph{IEEE}}
\thanks{Manuscript received December 10, 2017; revised April 8, 2018; accepted April 19, 2018. This work was supported by the Basic Science Research Program through the National Research Foundation of Korea (NRF) funded by the Ministry of Education (NRF-2017R1D1A1A09000835, NRF-2017R1C1B1004192). \em (Corresponding author: Won-Yong Shin.)}
\thanks{A. Malik and W.-Y. Shin are with the Department of Computer Science and Engineering, Dankook University, Yongin 16890, Republic of Korea (e-mail: adeel\_malik91@yahoo.com; wyshin@dankook.ac.kr).}
\thanks{S. H. Lim is with the Korea Institute of Ocean Science and Technology, Ansan 15627, Republic of Korea (e-mail: shlim@kiost.ac.kr).}
}

\markboth{IEEE Journal on Selected Areas in Communications}%
{Manuscript}
\maketitle

\newtheorem{axiom}{Axiom}
\newtheorem{lemma}{Lemma}
\newtheorem{theorem}{Theorem}
\newtheorem{prop}{Proposition}
\newtheorem{observation}{Observation}
\newtheorem{definition}{Definition}
\newtheorem{remark}{Remark}
\newtheoremstyle{case}{}{}{}{}{}{:}{ }{}
\theoremstyle{case}
\newtheorem{case}{Case}

\begin{abstract}
A large-scale content-centric mobile ad hoc network employing {\em subpacketization} is studied in which each mobile node having finite-size cache moves according to the {\em reshuffling mobility model} and requests a content object from the library independently at random according to the Zipf popularity distribution. Instead of assuming that one content object is transferred in a single time slot, we consider a more challenging scenario where the size of each content object is considerably large and thus only a subpacket of a file can be delivered during one time slot, which is motivated by a {\em fast mobility} scenario. Under our mobility model, we consider a single-hop-based content delivery and characterize the fundamental trade-offs between throughput and delay. The order-optimal throughput--delay trade-off is analyzed by presenting the following two content reception strategies: the sequential reception for uncoded caching and the random reception for maximum distance separable (MDS)-coded caching. We also perform numerical evaluation to validate our analytical results. In particular, we conduct performance comparisons between the uncoded caching and the MDS-coded caching strategies by identifying the regimes in which the performance difference between the two caching strategies becomes prominent with respect to system parameters such as the Zipf exponent and the number of subpackets. In addition, we extend our study to the random walk mobility scenario and show that our main results are essentially the same as those in the reshuffling mobility model.

\end{abstract}
\begin{IEEEkeywords}
Caching, MDS coding, mobile network, subpacketization, throughput--delay trade-off.
\end{IEEEkeywords}

\IEEEpeerreviewmaketitle

\section{Introduction}~\label{section:1}
Wireless data caching plays an important role in maintaining the sustainability of future wireless networks by reducing the backhaul rate and the latency for retrieving content from networks without incurring any additional load on costly backhaul links\cite{R1-2,R1-3}. The core idea of caching is to bring content objects closer to the users by allowing the end terminals or helper nodes to cache a subset of popular content files locally.
\subsection{Prior Work}~\label{section:11}
The analysis of capacity scaling laws in large-scale wireless networks has attracted wide attention due to the dramatic growth of communication entities in today's networks. The pioneering work characterizing the capacity scaling law of static ad hoc networks having $n$ randomly distributed source--destination pairs in a unit network area was presented in~\cite{gupta}, in which the per-node throughput of $\Theta\left(\frac{1}{\sqrt{n \log n}}\right)$ was shown to be achievable using a nearest neighbor multihop transmission strategy. There have been further studies on multihop schemes in~\cite{R1-5,R1-6,R1-8}, where the per-node throughput scales far slower than $\Theta(1)$. In addition to the multihop schemes, there has been a steady push to improve the per-node throughput of wireless networks up to a constant scaling by using novel techniques such as networks with node mobility~\cite{algamal,grossglauser}, hierarchical cooperation~\cite{R1-9}, infrastructure support~\cite{R1-16,R1-17}, and directional antennas~\cite{R1-13,R1-15}.

In sharp contrast to the studies on ad hoc network modeling in which sources and destinations are given and fixed, investigating {\em content-centric ad hoc networks} would be quite challenging. As content objects are cached by numerous nodes over a network, finding the nearest content source of each request and scheduling between requests play a vital role in improving the overall network performance. The scaling behavior of content-centric ad hoc networks has received a lot of attention in the literature~\cite{alfano,as_law,jeon,R1-19}. In \cite{as_law,jeon}, throughput scaling laws were analyzed for {\em static} ad hoc networks using multihop communication, which yields a significant performance gain over the single-hop caching scenario~\cite{R1-3, R1-19}. More precisely, a decentralized and random cache allocation strategy along with a local multihop protocol was presented in~\cite{jeon}. A centralized and deterministic cache allocation strategy was presented in~\cite{as_law}, where replicas of each content object are statically determined based on the popularity of each content object. On the other hand, in {\em mobile} ad hoc networks, performance on the throughput and delay under a reshuffling mobility model was analyzed in~\cite{alfano}, where the position of each node is independently determined according to random walks (with an adjustable flight size) which is updated at the beginning of each time slot. It was shown in \cite{alfano} that increasing the mobility of nodes leads to worse performance. Performance on the throughput and delay under a correlated mobility model was investigated in~\cite{liu}, where nodes are partitioned into multiple clusters and the nodes belonging to the same cluster move  in a correlated fashion. It was shown in \cite{liu} how correlated mobility affects the network performance. In addition, the optimal throughput--delay trade-off in mobile hybrid networks was studied in~\cite{anh} when each request is served by mobile nodes or static base stations (or helper nodes) via multihop transmissions. It was shown in \cite{anh} that highly popular content objects are mainly served by mobile nodes while the rest of the content objects are served by static base stations.

Recently, a different caching framework, termed coded caching \cite{Maddah1,Maddah2, lim, d2d}, has received a lot of attention in content-centric wireless networks. To achieve the global caching gain, the content placement (caching) phase was optimized so that several different demands can be supported simultaneously with a single coded {\em multicast} transmission. Another promising topic is applications of maximum distance separable (MDS)-coded caching, in which MDS-coded subpackets of content objects are stored in local caches and the requested content objects are retrieved using {\em unicast} transmission. It has been shown in~\cite{mds1,mds2, femto} that with some careful placement of MDS-encoded content objects, significant performance improvement can be attained over uncoded caching strategies.
\subsection{Main Contribution}~\label{section:12}
In this paper, we study the order-optimal throughput--delay trade-off performance in a large scale content-centric mobile ad hoc network employing {\em subpacketization} in which each node moves according to the {\em reshuffling mobility model}~\cite{alfano} and one central server is able to have access to the whole file library. We assume a cache enabled network in which time is divided into slots and each user requests a content object from the library independently at random according to a Zipf popularity distribution. The most distinctive feature in our model compared to previous approaches is that we consider the case when the users {\em mobility is too fast} to finalize a complete transition of a content in a single time slot. Our model is motivated by the increasing applications involving on-demand high-resolution videos requested by mobile users in future wireless networks. To account for the short time slot duration, we cache the content in multiple segments (subpackets) at the mobile nodes. We present two caching strategies, uncoded and MDS-coded caching. The main technical contributions of this paper are summarized as follows:
\begin{itemize}
	\item We first present a large-scale cache-enabled mobile network framework where the size of each content object is considerably large and thus only a subpacket of a file can be delivered during one time slot.
	\item We characterize fundamental trade-offs between throughput and delay for our content-centric mobile network for both uncoded sequential reception and the MDS-coded random reception cases under the reshuffling mobility model. 
		\item We formulate optimal cache allocation problems (i.e., the optimal content replication strategies) for both uncoded and MDS-coded caching scenarios and characterize the order-optimal solution using Lagrangian optimization.
		\item We analyze the order-optimal throughput--delay trade-off for both uncoded and MDS-coded cases and identify different operating regimes with respect to the transmission range and the number of subpackets. 
	\item We intensively validate our analysis by numerical evaluations including the order-optimal solution to the cache allocation problem and the throughput--delay trade-off.
	\item We identify the case where the performance difference between the uncoded and MDS-coded caching strategies become prominent with respect to system parameters including the Zipf exponent and the number of subpackets in a content object. 
	\item We extend our study to another scenario where each node moves according to the random walk mobility model.
\end{itemize}

The main motivation of the work is to alleviate the problematic case when a network of fast moving entities cannot
be served by the central server or is not cost-effective. For such cases, the idea is to use the cache-aided users as a distributed server for content distribution. This method essentially increases the capacity of the network by using the storage of each mobile node without the deployment of any additional expensive infrastructure. Under our proposed content-centric network, in addition to the caching gain, we are also capable of improving the overall throughput and delay performance since multiple device-to-device (D2D) communications are allowed in a single time slot. This paper is the first attempt to study large-scale content-centric ad hoc networks under a fading mobility model where subpacketization is employed, and thus sheds light on designing a caching framework in such mobility scenarios.
\subsection{Organization}~\label{section:13}
The rest of this paper is organized as follows. In Section II, some prerequisites and the system model is defined. In Section III, the content delivery protocol and reception strategies are presented. In Section IV, the fundamental throughput--delay trade-off is introduced and specialized in terms of scaling laws. The order-optimal throughput--delay trade-offs are derived by introducing the uncoded caching and MDS-coded caching strategies in Sections V and VI, respectively. In Section VII, numerical evaluations are shown to validate our analysis. In Section VIII, our study is extended to the random walk mobility model. Finally, Section IX summarizes the paper with some concluding remarks.
\subsection{Notations}~\label{section:14}
Throughout this paper, $\mathbb{E}[\cdot]$ is the expectation. Unless otherwise stated, all logarithms are assumed to be to the base 2. We use the following asymptotic notation: i) $f(x)= O(g(x))$ means that there exist constants $a$ and $c$ such that $f(x) \leq ag(x)$ for all $x > c$, ii) $f(x) = o(g(x))$ means that  $\lim_{x \rightarrow \infty  }\frac{f(x)}{g(x)} =0 $, iii) $f(x)=\Omega(g(x))$ if $g(x)= O(f(x))$, iv) $f(x) = \omega(g(x))$ means that  $\lim_{x \rightarrow \infty  }\frac{g(x)}{f(x)} =0 $, v)  $f(x)=\Theta(g(x))$ if $f(x)= O(g(x))$  and $f(x)=\Omega(g(x))$~\cite{bigo}.
\section{Prerequisite and System Model}~\label{section:2} 

\subsection{Overview of MDS Coding}~\label{section:21}
Linear coding is among the most popular coding techniques due to its simplicity and performance. The linear coding operation can be summarized as follows. We divide a content file $m$ into $K$ (uncoded) subpackets $\left\{F^{(u)}_{m,1}, F^{(u)}_{m,2}, \cdots, F^{(u)}_{m,K}\right\}$ and transmit them by linearly combining the subpackets with respect to an encoding vector $v = \left\{a_1, \cdots, a_K\right\}$, which is generated over a Galois field $GF(q)$ of size $q$ ~\cite{gf}. Each encoded subpacket is generated by
\begin{equation}\label{eq:enc}
\mathcal{E}_v =\sum_{j=1}^K a_{j} F^{(u)}_{m,j}, 
\end{equation}
\noindent where $\mathcal{E}_v$ is the encoded subpacket corresponding to the encoding vector $v$ and $a_j$ is the encoding coefficient for the $j$th subpacket. In \eqref{eq:enc}, the addition and multiplication operations are performed over the $GF(q)$. In this work, we consider a special class of linear codes, named MDS codes~\cite{mds}. Assume that a content file $m$ is divided into $K$ subpackets that are encoded into $r_m$ coded subpackets \!$\left\{\!F^{(c)}_{m,1},\! \cdots, F^{(c)}_{m,r_m}\!\!\right\}$\! using a $q$-ary $(r_m,K)$ MDS code. Then, by the property of MDS codes, reception of any subset of $K$ MDS-coded subpackets is sufficient to recover the complete file.
\subsection{System Model}~\label{section:22}
Let us consider a content-centric mobile ad hoc network consisting of $n$ mobile nodes and one central server, where $n$ mobile nodes are distributed uniformly at random in the network of a unit area (i.e., the dense network) and the central server is able to have access to the entire library of size $M=\Theta(n^\beta)$ via infinite-speed backhaul, where $0<\beta<1$. The time is divided into independent slots $t_1, t_2,\cdots$, and each mobile node is allowed to initiate a request during its allocated time slot. In our network model, the end nodes are assumed to prefetch a part of (popular) contents in their local cache from the central server when they are indoors. For example, during off-peak times, the central server can initiate the content placement phase and fill the cache of each node. On the other hand, for the case when the actual requests take place, we confine our attention to an outdoor environment where nodes are moving fast. For such cases, since file reception from the central server may not be cost-effective, only D2D communications are utilized for content delivery (e.g. \cite{R1-3}), i.e., the central server does not participate in the delivery phase.

We first adopt the reshuffling model~\cite{alfano} for the nodes' mobility pattern, which assumes that each mobile node will change their position uniformly at random over the network area at the start of each time slot and it will remain static during a time slot. In our content-centric mobile network, each node generates requests for content objects in the library during its allocated time slot. By following the approaches in ~\cite{R1-3, alfano, as_law, jeon, R1-19, liu, d2d}, we assume that the size of each content object $m \in \mathcal{M}$ is the same, where $\mathcal{M}= \left\{1, \cdots, M\right\}$. We assume that every node requests its content object independently according to the Zipf distribution~\cite{alfano,anh,milad,zipf}
\begin{equation}\label{eq:zipf}
p^{pop}_m = \frac{m^{-\alpha}}{H_{\alpha}(M)},
\end{equation}
\noindent where $\alpha>0$ is the Zipf exponent, and $H_{\alpha}(M) = \sum_{i=1}^{M} i^{-\alpha}$ is a normalization constant formed as the Riemann zeta function and is given by
\begin{equation} \label{eq:H} 
H_{\alpha}(M) = \begin{cases}
    \Theta\left(1 \right) & \alpha > 1 \\
    \Theta\left(\log M \right)              & \alpha =1 \\
		\Theta\left( M^{1-\alpha} \right)  & \alpha <1.
\end{cases} 
\end{equation}
The main theme of our study is to understand how to deal with incomplete file transmissions in a mobile network. In such an example, a user watching a high-resolution video on mobile devices may move away from a source node while the file has not been completely transmitted. Simply expanding the time slot to ``fit'' the throughput of the user may not be feasible in such a case since the user is physically moving away resulting in a lost connection. Our goal is to design strategies that are robust against such examples using the concept of {\em subpacketization} and to analyze their performance. Hence, we assume that each content object is divided into $K$=$\Theta(n^\gamma)$ subpackets, where $0\!<\!\gamma\!<\!1$ and every subpacket has the same (unit) size such that each of the requesting nodes is able to completely download one subpacket from one of its nearest source node in one time slot. In a content-centric network, each node is equipped with a local cache to store the content objects, and in our work, we assume a practical scenario where each node is equipped with a local cache of the same finite storage capacity $S\!=\!\Theta(K)$, i.e., the cache can store $S$ distinct subpackets\footnote{Our problem formulation can be extended to a more general case having heterogeneous cache sizes by replacing the total caching constraints in~\eqref{eq:Cons1} and~\eqref{eq:Cons1c} by $\sum_{m=1}^M KX_m \le \sum_{i=1}^n S_i$ and $\sum_{m=1}^M r_m \le \sum_{i=1}^n S_i$, respectively, where $S_i$ is the storage capacity of node $i$. The general problems can be solved by following the same lines as those in Sections \ref{section:5} and \ref{section:6}.}. In cache-enabled wireless networks, content delivery can be divided into two stages, the content placement phase and the content delivery phase. We first describe the placement phase for both uncoded and MDS-coded caching scenarios, which determines the strategy for caching subpackets of content objects in the storage of $n$ nodes.\\
\textbf{Content placement phase for uncoded caching:} Let $X_{m,i}$, $m\in\mathcal{M}$, $i\in\{1,\cdots, K\}:=[1:K]$ represent the number of replicas (will be optimized later on the basis of the popularity of the content $m$) of each subpacket $i$ of content $m$. Since we will assume that $X_{m,i}$ is the same for all $i \in [1:K]$, henceforth we will drop the index $i$ and denote $X_{m,i}$ by $X_{m}$. Similarly as in \cite{as_law, alfano}, during the caching phase, the $X_{m}$ replicas of subpacket $i$ of content object $m$ are stored in the caches of $X_{m}$ distinct nodes. In order to have a feasible cache allocation strategy, $\left\{X_{m}\right\}_{m=1}^M$ should satisfy the following constraints:
\begin{align}
	\sum_{m=1}^MKX_{m} &\leq Sn, \label{eq:Cons1}\\
1 \leq X_{m} &\leq n. \label{eq:Cons2b}
\end{align}
\noindent Note that the total caching constraint in \eqref{eq:Cons1} is a relaxed version of the individual caching constraints~\cite{milad} and the  constraint in \eqref{eq:Cons2b} is to make sure that the network contains at least $1$ and at most $n$ copies of the each content.\\
\textbf{Content placement phase for MDS-coded caching:} 
For the MDS-coded caching strategy, instead of replicating the subpackets, we encode $K$ subpackets of each content $m$  into $r_m$ MDS-coded subpackets (which will be optimized later). During the caching phase, $r_{m}$ encoded subpackets of content object $m$ are stored in the caches of $r_{m}$ distinct nodes. By the property of MDS codes, a client requesting content $m$ only needs to make sure that any $K$ out of $r_m$ distinct MDS-coded subpackets are decoded to successfully recover the entire content $m$. In order to have a feasible cache allocation strategy, $\!\left\{r_{m}\!\right\}_{m=1}^M$\! should satisfy the following constraints: 
\begin{align}
	\sum_{m=1}^Mr_{m} &\leq Sn, \label{eq:Cons1c} \\
	r_{m} &\geq K. \label{eq:Cons2cb}
\end{align}
\noindent Note that the constraint in \eqref{eq:Cons2cb} is to make sure that for some content $m \in \mathcal{M}$, there exist at least $K$ MDS-coded subpackets in the network so that it can be recovered by a requesting node via MDS code decoding.

We now move on to the delivery phase, which allows the requested content objects to be delivered from the source node to the requesting node over wireless channels  (i.e., D2D communications) possibly during peak times. As addressed before, content is assumed to be retrieved under an outdoor environment in which the nodes do not have reliable connection with the central server due to the fast mobility condition. In the delivery phase, each node downloads its requested content object via {\em single-hop} in its allocated time slots\footnote{We note that under the reshuffling mobility model, the network performance cannot be improved by delivering content over multihop routes~\cite[Section III]{alfano}.}, from one of the nodes storing the requested content object in their caches. The protocol model in~\cite{gupta} is adopted for successful content transmission. According to the protocol model, the content delivery from source node $s$ to requesting node $d$ will be successful if the following conditions hold; 1) $d_{sd}(t_a)\!\leq R$ and 2) $d_{bd}(t_a)\geq(1+\Delta)R$, where $d_{sd}(t_a)$ represents the Euclidean distance between the nodes $s$ and $d$ at given time slot $t_a$, $d_{bd}(t_a)$ represents the distance between the nodes $b$ and $d$, for every node $b$ that is simultaneously transmitting at given time slot $t_a$, $\Delta>0$ is a guard factor, and $R>0$ is the transmission range of each node. We assume $R=\Omega\left(\sqrt{\log n/n}\right)$ and $R=O(1)$, such that each square cell of area $a(n)=\!R^2$ has at least one node with high probability (whp) (see~\cite{gupta} for details). When successful transmission occurs, we assume that the total amount of data transferred during the slot is large enough to transfer one subpacket (either uncoded or MDS-coded) of a content from the sender to the receiver\footnote{Unlike our setup, the work in~\cite{algamal} adopted the fluid model to achieve improved performance as the multihop communications become feasible during each slot.}. Nevertheless, in a given time slot, a requesting node can receive no more than one subpacket. Thus, for a requesting node to successfully receive the entire content file, at least $K$ time slots are required. Note that by properly setting the parameter $K$, the size of each subpacket can be flexibly adjusted so that one subpacket would be transmitted or retrieved in one time slot when the above conditions in the protocol model hold.

\subsection{Performance Metrics}~\label{section:23}
In this subsection, we define performance metrics used throughout our paper. We define a {\em scheme} as a sequence of policies, which determines the transmission scheduling in each time slot as well as the cache allocations for all nodes. For a given scheme, the average content transfer delay $D_{avg}(n)$ (expressed in time slots)  and the per-node throughput $\lambda(n)$ (expressed in content/slot) for a content-centric mobile ad hoc network are defined as follows.
\begin{definition}[Average Content Transfer Delay  $D_{avg}(n)$]~\label{df:davg} Let $D(j,i)$ denote the transfer delay of the $i$th request for any content object by node $j$, which is measured from the moment that the requesting message leaves the requesting node until all the $K$ corresponding subpackets of the content object arrives at the node from the source nodes. Then, the delay over all the content requests for node $j$ is $\limsup_{z\rightarrow\infty}\frac{1}{z} \sum_{i=1}^z D(j,i)$ for a particular realization of the network. In this case, the average content transfer delay $D_{avg}(n)$ of all nodes is defined as
\begin{equation} \label{eq:ddavg}
D_{avg}(n) \overset{\Delta}{=} \mathbb{E}\left[ \frac{1}{n}\sum_{j=1}^n \limsup_{z\rightarrow\infty}\frac{1}{z} \sum_{i=1}^z D(j,i)\right],
\end{equation}
where the expectation is over all network realizations.  
\end{definition}
\begin{definition}[Per-Node Throughput $\lambda(n)$]~\label{df:pnt} Let $T(j,\tau)$ denote the total number of requested content objects received by node $j$ during $\tau$ time slots. Note that this could be a random quantity for a given network realization. Then, the per-node throughput $\lambda(n)$ in our cache-enabled mobile network is
\begin{equation} \label{eq:dpnt} 
\lambda(n) \overset{\Delta}{=} \mathbb{E}\left[\frac{1}{n}\sum_{j=1}^n \liminf_{\tau\rightarrow\infty}\frac{1}{\tau} T(j,\tau)\right],
\end{equation}
where the expectation is over all network realizations.
\end{definition}
\section{Content Delivery Protocol and Reception Strategies}~\label{section:3}
In this section, we describe the protocol for the content delivery along with the file reception strategies for both uncoded and MDS-coded caching.
\subsection{Content Delivery}~\label{section:31}
In the following, we explain the strategy for content delivery. First, each node generates a content request for a subpacket (either uncoded or MDS-coded) of content $m$ according to the Zipf's popularity distribution. If the requesting node finds a potential source node within a single-hop range, i.e., within a radius of $R$, then it will start retrieving its desired content. Otherwise, it waits until it finds an available source node for the request. Next, it generates another request for the rest of the subpackets of content $m$ until the requesting node successfully receives the $K$ distinct subpackets. Finally, it generates another request for a new content object by following the same procedure as described above. 
\subsection{Reception Strategies}~\label{section:32}
In this subsection, we explain the sequential and random content reception strategies for uncoded and MDS-coded caching, respectively. These content reception strategies represent the sequence in which the $K$ subpackets of a desired content object are delivered to the requesting node. 
\subsubsection{Sequential Reception of Uncoded Content}~\label{section:321}
We first explain the sequential reception strategy for the uncoded caching case. In the uncoded case, the reception strategy is sequential; that is, all the $K$ subpackets of a content object are delivered in a sequence to the requesting node. An illustration of the sequential reception strategy is shown for three representative cases. In Fig.~\ref{fig:seq_rec_a}, at time slot $t_a$, a node requests the subpacket $F^{(u)}_{m,1}$ of content $m$ to the nodes within its transmission range $R$. The request is respondent within a time slot if there exists a source node that has $F^{(u)}_{m,1}$ in his/her cache and falls within the transmission range of the requesting node at time slot $t_a$. In Fig.~\ref{fig:seq_rec_b}, the requesting node requests the subpacket $F^{(u)}_{m,2}$ of content $m$ at time slot $t_b$ and fails to find any source node within its transmission range. Thus, the requesting node will wait irrespective of the fact that there is a source node within its transmission range storing the subpacket $F^{(u)}_{m,3}$. In Fig.~\ref{fig:seq_rec_c}, the requesting node is still looking for the subpacket $F^{(u)}_{m,2}$ and the request is responded by a source node that has $F^{(u)}_{m,2}$ in his/her cache and falls within the transmission range of the requesting node at time slot $t_c$.
\begin{figure}[t]
\centering
\subfigure[Time Slot $t_a$]{  \includegraphics[ height=2.7cm, width= 2.6cm]{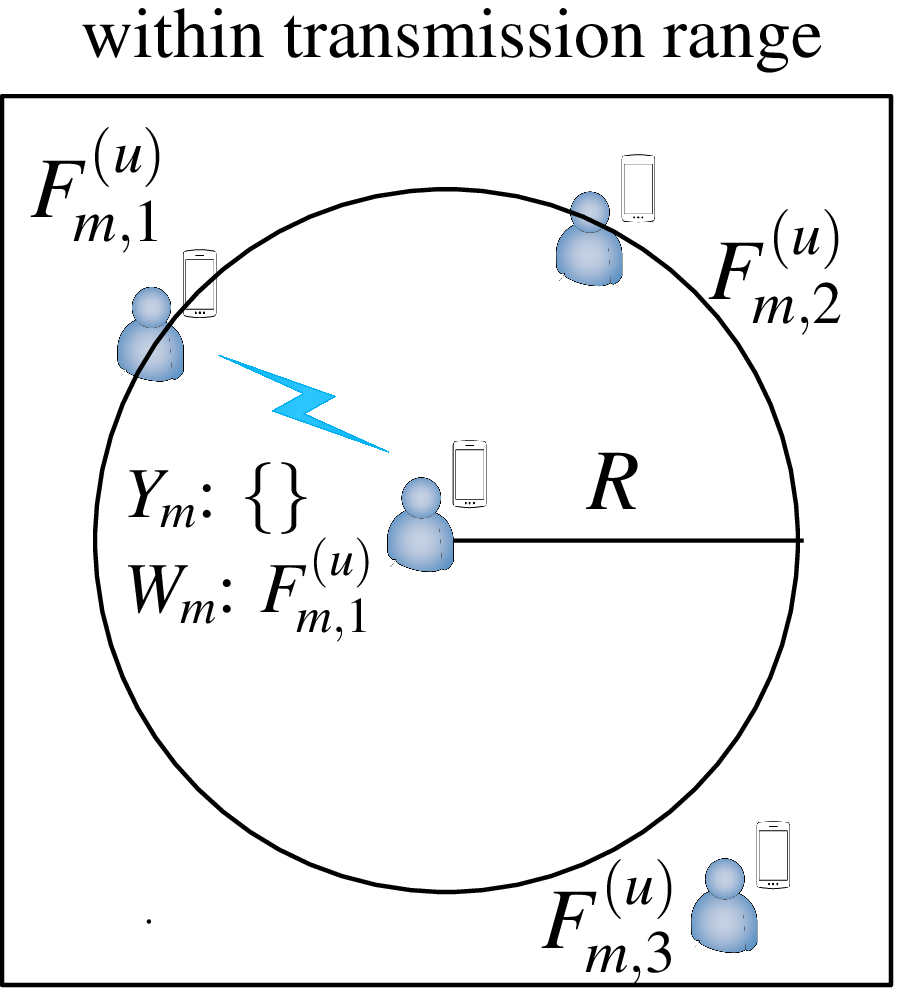} \label{fig:seq_rec_a}}
\subfigure[Time Slot $t_b$]{  \includegraphics[ height=2.7cm, width= 2.6cm]{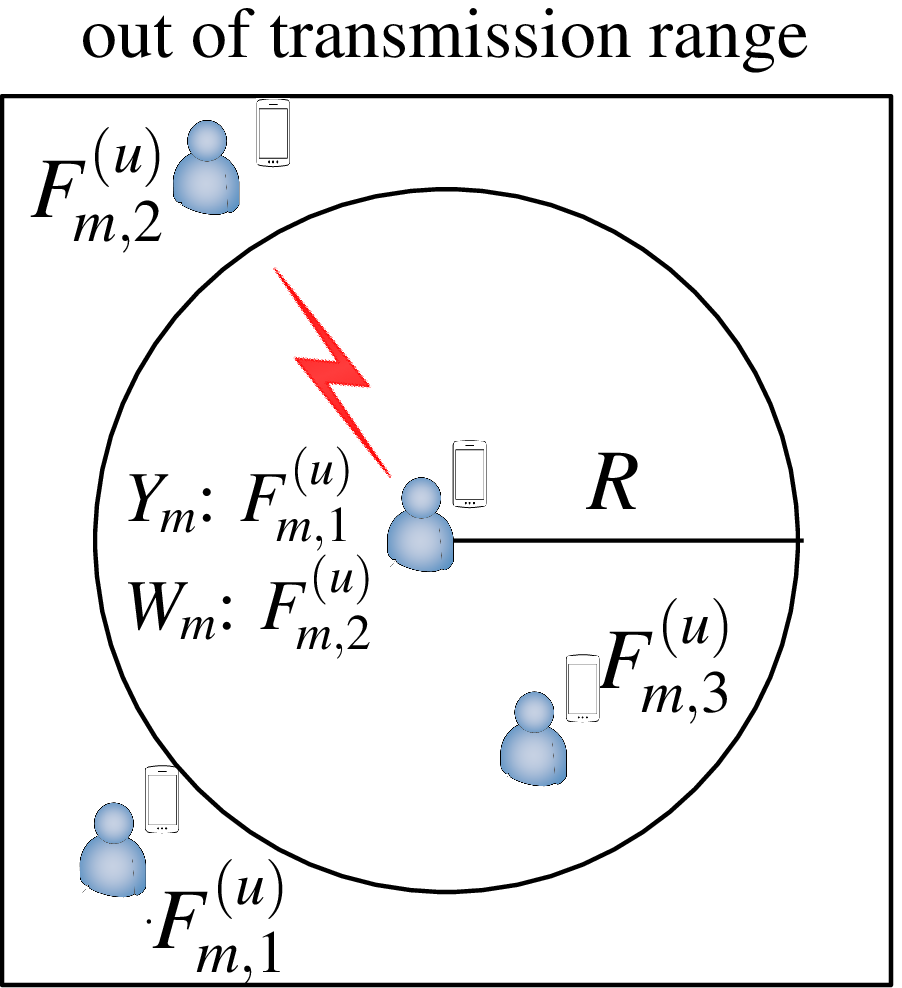} \label{fig:seq_rec_b}}
\subfigure[Time Slot $t_c$]{  \includegraphics[ height=2.7cm, width= 2.6cm]{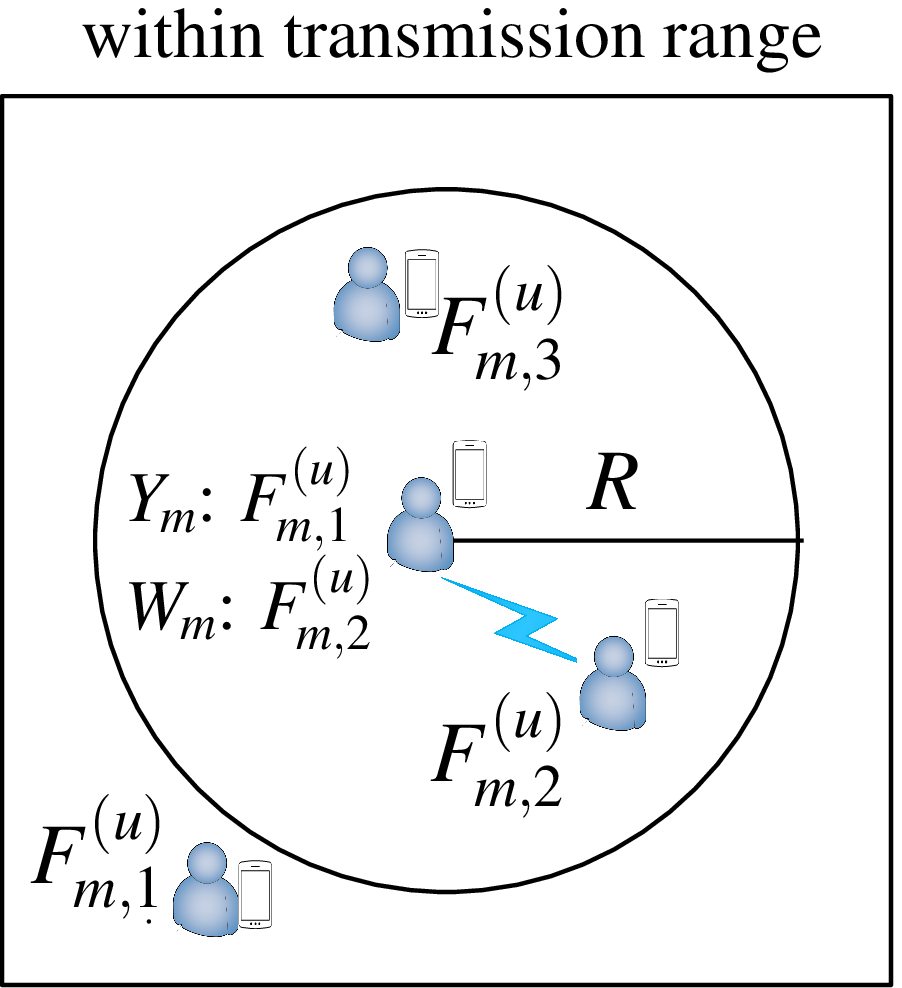} \label{fig:seq_rec_c}}
\caption{ Content delivery following the sequential reception strategy for an uncoded content, where $F^{(u)}_{m,i}$ is the $i$th uncoded subpacket of content $m$, $Y_m$ is the set of received content's subpackets, and $W_m$ is the set of required subpackets of content $m$.}
\label{fig:seq_rec}
\end{figure} 
\subsubsection{Random Reception of MDS-coded Content}~\label{section:322}
In the MDS-coded caching case, file reception is random; that is, the requesting node may receive any of the $K$ out of $r_m$ MDS-coded subpackets of a content object in an arbitrary order. Figure~\ref{fig:ran_rec} is an illustration of the random reception strategy. In Fig.~\ref{fig:ran_rec_a}, at time slot $t_a$, a node requests the subpackets of content $m$ from the nodes within its transmission range $R$. The request is respondent within the one time slot by a source node that has $F^{(c)}_{m,2}$ in his/her cache and falls within the transmission range of the requesting node at time slot $t_a$. In Fig.~\ref{fig:ran_rec_b}, a node requests the remaining subpackets of content $m$ and the request is responded by a source node that has $F^{(c)}_{m,3}$ in his/her cache that falls within the transmission range of the requesting node at time slot $t_b$.
\begin{figure}[t]
 \centering
 \subfigure[Time Slot $t_a$ ] { \includegraphics[height=3.8cm]{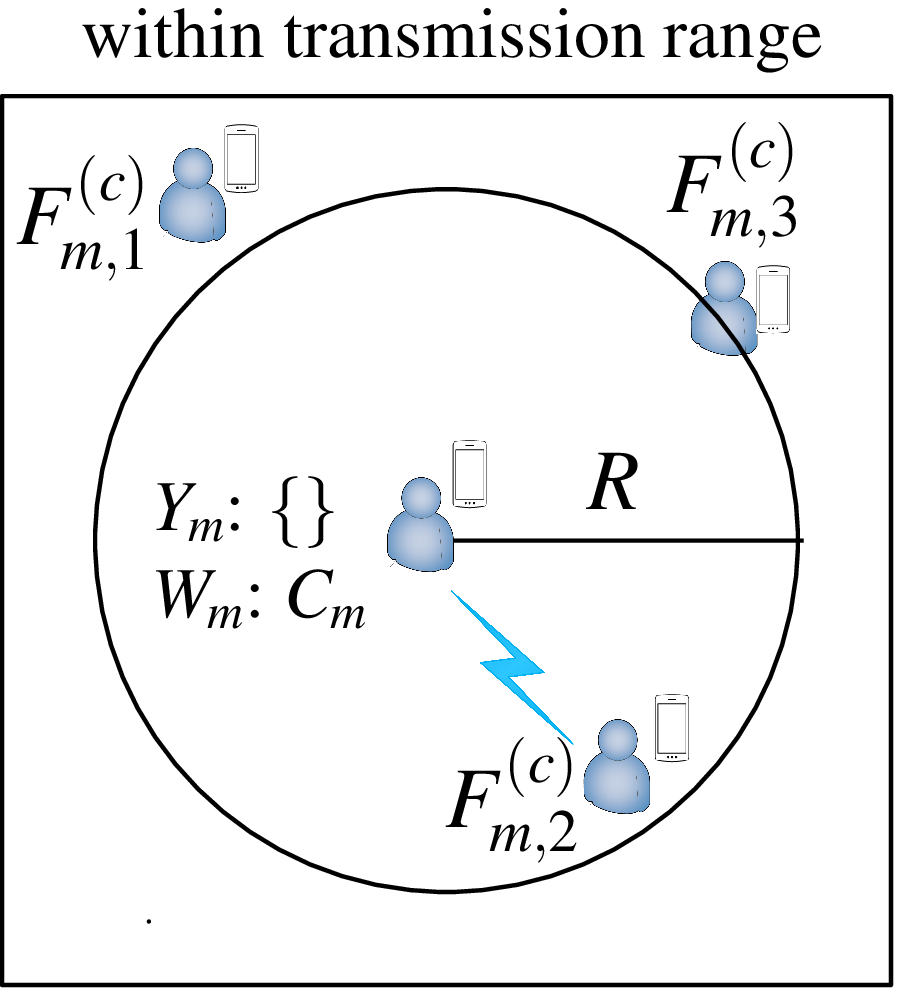} \label{fig:ran_rec_a} }
 \subfigure[Time Slot $t_b$ ] { \includegraphics[height=3.8cm]{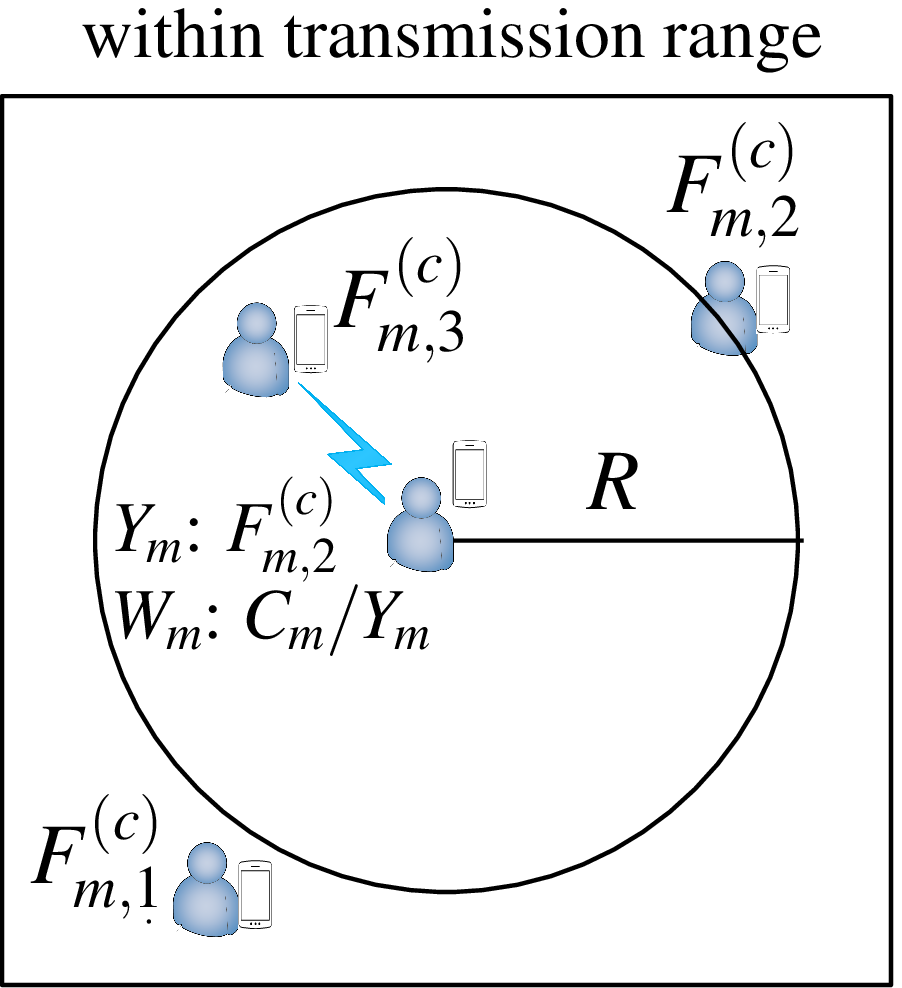} \label{fig:ran_rec_b}	}
\caption{ Content delivery following the random reception strategy for an MDS-coded content, where $F^{(c)}_{m,j}$ is the $j$th MDS-coded subpacket of content $m$, $C_m$ is the set of all the MDS-coded subpackets of content $m$, $Y_m$ is the set of received content's MDS-coded subpackets, and $W_m$ is the set of required MDS-coded subpackets of content $m$.}
\label{fig:ran_rec}
\end{figure}
Intuitively, the random reception strategy should perform better than the sequential reception case. Nonetheless, both schemes play an important role in caching for different applications in practice. For example, the random reception strategy seems to be suitable for the case where content such as videos and documents are first downloaded completely, and then viewed offline. On the other hand, for the case where a user is streaming videos online, the random reception strategy will not work since the content is required to be downloaded sequentially.

 We note that a playback buffer~\cite{buffer} that stores a few future subpackets could enhance the quality of a video streaming service as it enables us to play the next portion of the video, e.g., $j$th subpacket $F_{m,j}^{(u)}$  of content $m\in\mathcal{M}$ before a requesting user reaches the point of viewing the end of the current subpacket of the video (i.e., $F_{m,j-1}^{(u)}$). In this paper, we will not account for how such playback buffers are deployed and how the content is updated within the playback buffer, which goes beyond our scope\footnote{As long as the playback buffer has limited capacity that is independent of the scaling system parameters, content delivery for online on-demand video streaming would only be possible by sequential reception, which is consistent with the current video streaming protocols as discussed in~\cite{R1-19, R1-3}. This is because buffering the subpackets in an arbitrarily way does not guarantee seamless video streaming as the buffers may not contain the next sequential portion of the video that a requesting node is watching.}.

We will see in the next section that these reception strategies play key roles in defining the average content transfer delay $D_{avg}(n)$.
\section{Throughput--Delay Trade-off}~\label{section:4}
In this section, we characterize a fundamental throughput--delay trade-off in terms of scaling laws for the content-centric mobile network using the proposed content delivery protocol. 
\begin{theorem}\label{th:TD} 
Consider nodes generating requests according to the content delivery protocol in Section \ref{section:31}. Then, the throughput--delay trade-off in our proposed cache-enabled mobile network is given by
\begin{align} \label{eq:tdo} 
\lambda(n) = \Theta\left(\frac{1}{na(n) D_{avg}(n)}\right),
\end{align}
where $\lambda(n)$ is the per-node throughput, $D_{avg}(n)$ is the average content transfer delay, and $a(n)=R^2$ is the area in which a node can communicate with other nodes. 
\end{theorem}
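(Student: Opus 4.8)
The plan is to prove the equivalent product form $n\,a(n)\,\lambda(n)\,D_{avg}(n)=\Theta(1)$ by combining two ingredients: a \emph{spatial-reuse} bound that limits how many single-hop transmissions the network can sustain per slot, and a renewal/flow-conservation argument that converts a node's per-content delay into a per-node content-completion rate. I would prove a matching upper bound (the $O$ direction) and lower bound (the $\Omega$ direction) on $\lambda(n)$, so that the $\Theta$-statement in \eqref{eq:tdo} follows.

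First I would pin down the spatial-reuse constraint. Partitioning the unit area into cells of area $a(n)=R^2$ and invoking the protocol model, any two concurrently successful receptions must have their receivers separated by at least $\Delta R$ (if $d_{sd}\le R$ and $d_{bd}\ge(1+\Delta)R$ then the two receivers are $\ge\Delta R$ apart), so the disks of radius $\tfrac{\Delta}{2}R$ centred at the receivers are disjoint; since each such disk has area $\Theta(a(n))$ and the arena has unit area, at most $\Theta(1/a(n))$ transmissions can succeed in a single slot. A constant-reuse (TDMA) colouring of the cells, as in \cite{gupta,alfano}, activates one transmitter per cell for a constant fraction of the time and therefore attains $\Theta(1/a(n))$ concurrent successful single-hop transmissions. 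Dividing this aggregate budget fairly among the $n$ nodes shows that any given node can be the receiver of a useful subpacket at a long-run rate of $\Theta\!\left(\frac{1}{n a(n)}\right)$ subpackets per slot; equivalently, its successful service opportunities are spaced $\Theta(n a(n))$ slots apart.

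Next I would run a renewal argument on a single node. Because a node issues a new content request the instant the previous content is completed, its timeline splits into back-to-back content sessions; under sequential reception the $i$-th session ends once the $K$ ordered uncoded subpackets have all arrived, and under random reception once any $K$ of the $r_m$ MDS-coded subpackets have arrived. By the renewal--reward theorem the per-node content-completion rate equals the reward per session ($1$ content) divided by the mean session length. I would write the mean session length as the product of the mean number $D_{avg}(n)$ of service opportunities a session consumes (the source-availability--limited delay, counting opportunities that elapse while no in-range source holds a needed subpacket) and the spacing $\Theta(na(n))$ between opportunities obtained above; this yields $\lambda(n)=\Theta\!\left(\frac{1}{n a(n) D_{avg}(n)}\right)$, which is \eqref{eq:tdo}.

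The main obstacle is the coupling in the last step: I must show that the network-wide reuse budget of $\Theta(1/a(n))$ useful transmissions per slot can be realised \emph{simultaneously} for (almost) all nodes while each still advances its own session, even though under the reshuffling model it is random which requesting nodes have an in-range source for their currently needed subpacket in a given slot. For the converse this amounts to verifying that slots a node spends waiting for an in-range source cannot be reassigned by the scheduler to breach the guard-zone area bound, so that no policy beats $\Theta\!\left(\frac{1}{na(n)D_{avg}(n)}\right)$; for achievability it requires exhibiting a cell-by-cell matching of receivers to in-range sources that meets the area bound up to constants under reshuffling, which cleanly separates the reuse cost $na(n)$ from the source-availability cost $D_{avg}(n)$. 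The remaining pieces---the guard-zone packing and the TDMA colouring---are routine and follow \cite{gupta,alfano}.
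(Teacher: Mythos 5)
Your proposal is correct and follows essentially the same route as the paper: the paper's proof also combines the elementary renewal theorem (converting the per-session delay $D_{avg}(n)$ into a completion rate of $1/D_{avg}(n)$ for an always-active requester) with the protocol-model spatial-reuse constraint that only one node per area $a(n)$ can receive in a slot, yielding the $\Theta\bigl(\frac{1}{na(n)}\bigr)$ per-node scheduling factor. Your treatment is somewhat more explicit about the guard-zone packing, the TDMA achievability, and the scheduling/availability coupling, but the decomposition and the two key ingredients are the same.
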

\begin{proof}  
The fundamental throughput--delay trade-off for the content-centric network employing the proposed content delivery protocol in Section~\ref{section:31} can be established using the elementary renewal theorem~\cite[Chapter 8]{renewal}. Let $\kappa(\tau, j)$ denote the total number of content objects transferred to request node $j$ observed up to $\tau$ time slots when node $j$ is assumed to be an active requester in every time slot. Then from the fact that the transfer delay $D(j,i)$ of the $i$th request for any content object by node $j$ represents the inter-arrival time, it follows whp that
\begin{align*} 
\frac{1}{n}\sum_{j=1}^n \lim_{\tau\rightarrow\infty} \frac{\kappa(\tau,j)}{\tau} = \frac{1}{D_{avg}(n)},
\end{align*}
where $D_{avg}(n)$ is the average content transfer delay over all nodes in \eqref{eq:ddavg}. Since only one node in the transmission range of area $a(n)$ can be active in each time slot, the achievable per-node throughput in \eqref{eq:dpnt} is then expressed as \eqref{eq:tdo}, which completes the proof of Theorem~\ref{th:TD}.
\end{proof}
Theorem~\ref{th:TD} implies that the per-node throughput $\lambda(n)$ can be characterized for given the average content transfer delay $D_{avg}(n)$ or vice versa. Hence, we focus on minimizing $D_{avg}(n)$, which is equivalent to maximizing $\lambda(n)$ for a given $a(n)$. We establish the following lemma, which formulates the average content transfer delay $D_{avg}(n)$ for both the uncoded sequential reception in Section~\ref{section:321} and the MDS-coded random reception in Section~\ref{section:322}.
\begin{lemma}\label{le:otd} 
Consider a content-centric mobile network with nodes retrieving their requests according to the content delivery protocol in Section~\ref{section:31}. Given a cache allocation strategy, the average content transfer delay $D_{avg}(n)$ for the uncoded caching case employing the sequential reception strategy in Section~\ref{section:321} is given by
\begin{align} \label{eq:davg_s} 
D_{avg}(n) = \Theta \left( \sum_{m=1}^M   \frac{K p^{pop}_m}{\min\left( 1, a(n) X_{m} \right)} \right)
\end{align}
\noindent and $D_{avg}(n)$ for the MDS-coded caching case employing the random reception strategy in Section~\ref{section:322} is given by
\begin{align} \label{eq:davg_c} 
D_{avg}(n) = \Theta \left( \sum_{m=1}^M  \sum_{j=0}^{K-1} \frac{p^{pop}_m}{\min\left( 1, (r_m-j) a(n) \right)} \right).
\end{align}
\end{lemma}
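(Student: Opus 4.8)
The plan is to reduce the computation of $D_{avg}(n)$ to the analysis of the expected waiting time for a single subpacket request, exploiting the fact that the reshuffling model re-randomizes all node positions independently at the start of every time slot. First I would fix a tagged requesting node (assumed active in every slot, as in the proof of Theorem~\ref{th:TD}) and a target content $m$, and compute the per-slot probability that a useful source lies within its transmission range. Under reshuffling, each relevant source node is placed uniformly over the unit area, so a given source falls within the disk of radius $R$ with probability $\Theta(a(n))$, and these capture events are independent across distinct sources given the requester's location. For uncoded caching there are $X_m$ copies of the sought subpacket, so the single-slot success probability is $1-(1-\Theta(a(n)))^{X_m}=\Theta(\min(1,a(n)X_m))$, the cap at $1$ reflecting the regime $a(n)X_m=\omega(1)$. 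Because positions are redrawn independently each slot, the number of slots to acquire one subpacket is geometric with mean $\Theta\!\big(1/\min(1,a(n)X_m)\big)$.

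Next I would assemble the delay for a whole content. In the uncoded sequential case all $K$ subpackets share the same replication $X_m$ and are fetched one after another, so by linearity of expectation the expected delay to complete content $m$ is $K$ times the single-subpacket mean, i.e. $\Theta\!\big(K/\min(1,a(n)X_m)\big)$. For the MDS random case I would track the number $j$ of distinct coded subpackets already decoded: when $j$ have been collected, exactly $r_m-j$ of the $r_m$ stored coded subpackets are still useful and reside on $r_m-j$ distinct nodes, so the per-slot probability of obtaining a fresh useful subpacket is $\Theta(\min(1,(r_m-j)a(n)))$ and the mean number of slots for the $(j+1)$-th new subpacket is its reciprocal. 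Summing these geometric waiting times over $j=0,\dots,K-1$ gives expected delay $\Theta\!\big(\sum_{j=0}^{K-1} 1/\min(1,(r_m-j)a(n))\big)$ to collect the $K$ subpackets needed for decoding.

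Finally I would connect these per-request expectations to the definition in~\eqref{eq:ddavg}. Since each request is initiated on a freshly reshuffled network, the delays $D(j,i)$ for successive requests of a fixed node are i.i.d., so the strong law of large numbers makes the inner time average $\frac{1}{z}\sum_{i=1}^z D(j,i)$ converge almost surely to the expected single-request delay; averaging over the Zipf choice of content then replaces this by $\sum_{m=1}^M p^{pop}_m \,\mathbb{E}[D_m]$, and substituting the two expressions above yields~\eqref{eq:davg_s} and~\eqref{eq:davg_c} respectively.

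The main obstacle I expect is making the single-slot success probability rigorous rather than heuristic, and isolating it cleanly from the scheduling contention that is already absorbed into the factor $na(n)$ in Theorem~\ref{th:TD}. Concretely, I must show that the per-source capture probability is $\Theta(a(n))$ uniformly (handling boundary effects of the unit square) and that the union over $X_m$ (resp.\ $r_m-j$) independently placed sources saturates as $\Theta(\min(1,a(n)X_m))$ across both the sparse and dense regimes; a cell/occupancy argument that tiles the area into cells of size $a(n)=R^2$, each holding $\Theta(na(n))$ nodes whp, supplies the needed concentration. In the MDS case the extra care is to confirm that the effective pool of useful sources is genuinely $r_m-j$: since distinct coded subpackets sit on distinct nodes, a requester in range of both useful and already-decoded sources can always select a useful one, so only the $r_m-j$ useful sources govern progress and their reciprocal success probability produces the summand in~\eqref{eq:davg_c}.
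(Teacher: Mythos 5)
Your proposal is correct and follows essentially the same route as the paper's proof: per-slot contact probability $1-(1-a(n))^{X_m}$ (resp.\ $1-(1-a(n))^{r_m-j}$) order-equivalent to $\min(1,a(n)X_m)$ (resp.\ $\min(1,(r_m-j)a(n))$), geometric waiting times summed over the $K$ subpackets, and averaging over the Zipf popularity. The extra care you flag (boundary effects, saturation of the union bound, tracking the pool of $r_m-j$ useful coded sources) is precisely what the paper's argument implicitly relies on, so there is no substantive difference in approach.
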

\begin{proof} First, consider the uncoded caching case employing the sequential reception strategy. Given a cache allocation strategy $\left\{X_{m}\right\}_{m=1}^M$, for any requesting mobile node, the transfer delay associated to the $i$th subpacket of content $m\in \mathcal{M} $ is given by the number of time slots that it take for a node to come in contact with another node storing the desired content, which is geometrically distributed with mean $1/p^{seq}_{m,i}$. Here, $p^{seq}_{m,i}$ is the contact probability that a node requesting the $i$th subpacket of content $m \in \mathcal{M}$ falls in a given time slot within distance $R$ of a node holding the requested subpacket, which is given by
\begin{align} \label{eq:pmi} 
 p^{seq}_{m,i}= 1- \left( 1- a(n)\right)^{X_{m}} ~~~ i\in\left[ 1:K\right].
\end{align}
The contact probability $p^{seq}_{m,i}$ in order sense is equivalent to $\Theta\left( \min\left( 1, a(n) X_{m} \right)\right)$. Then, the number of time slots required to successfully receives content object $m\in \mathcal{M}$ consisting of $K$ subpackets is given by $\Theta \left(\frac{K}{\min\left( 1, a(n) X_{m} \right)} \right)$. From the fact that each node generates its request following the same Zipf's law, the $D_{avg}(n)$ for the content-centric mobile network employing the sequential reception of the uncoded content is given by   
\begin{equation*} \label{eq:davg_S} 
D_{avg}(n) = \Theta \left( \sum_{m=1}^M \frac{Kp^{pop}_m}{\min\left( 1, a(n) X_{m} \right)}\right).    
\end{equation*} 
Next, we characterize the average content transfer delay $D_{avg}(n)$ for the case of MDS-coded caching employing the random reception strategy. Given a cache allocation strategy $\left\{r_{m}\right\}_{m=1}^M$, the contact probability $p^{ran}_{m,j}$ for the MDS-coded caching based random reception strategy is the probability that a node having pending requests for $K-j$ MDS-coded subpackets of content $m$ falls in a given time slot within distance $R$ of a node holding one of the requested MDS-coded subpackets while the requesting node is assumed to have already received $j$ MDS-coded subpackets. Then, $p^{ran}_{m,j}$ is given by 
\begin{equation} \label{eq:pmjc} 
p^{ran}_{m,j} = 1- \left( 1- a(n)\right)^{(r_m-j)} ~~~j \in\left[ 0:K-1\right].
\end{equation}
The contact probability $p^{ran}_{m,j}$ in order sense is equivalent to $\Theta\left(\min\left( 1, (r_m-j) a(n) \right)\right)$. Then, the expected number of time slots required to successfully receives content object $m\in \mathcal{M}$ consisting of $K$ MDS-coded subpackets is given by $\Theta \left( \sum_{j=0}^{K-1} \frac{1}{\min\left( 1, (r_m-j) a(n) \right)} \right)$. Thus, the $D_{avg}(n)$ for the content-centric mobile network employing the random reception of the MDS-coded content is given by
\begin{equation*} \label{eq:davg_C} 
D_{avg}(n) = \Theta \left( \sum_{m=1}^M  \sum_{j=0}^{K-1} \frac{p^{pop}_m}{\min\left( 1, (r_m-j) a(n) \right)} \right).
\end{equation*} 
This completes the proof of the lemma.
\end{proof}
From Theorem~\ref{th:TD} and Lemma~\ref{le:otd}, the per-node throughput $\lambda(n)$ for the case of uncoded caching can be obtained using~\eqref{eq:tdo} and~\eqref{eq:davg_s}, while the per-node throughput $\lambda(n)$ for the case of MDS-coded caching can be obtained using~\eqref{eq:tdo} and~\eqref{eq:davg_c}. As expected, Lemma~\ref{le:otd} implies that the average content transfer delay $D_{avg}(n)$ for both reception strategies is influenced by the cache allocation strategies. The optimal performance in term of minimum average content transfer delay $D_{avg}(n)$ can be obtained by optimally selecting the cache allocation strategy, which is not straightforward due to caching constraints. Also, note that by Theorem~\ref{th:TD}, selecting the optimal cache allocation strategy that minimizes $D_{avg}(n)$ is equivalent to maximizing $\lambda(n)$ for a given $a(n)$. In the next section, we characterize the minimum average content transfer delay $D_{avg}(n)$ under our network model with subpacketization for uncoded caching by presenting the optimal cache allocation strategy.
\section{Order-Optimal Uncoded Caching in Mobile Networks with Subpacketization }~\label{section:5}
In this section, we characterize the order-optimal average content transfer delay $D_{avg}(n)$ and the corresponding maximum per-node throughput $\lambda(n)$ of the cache-enabled mobile ad hoc network employing subpacketization by selecting the order-optimal cache allocation strategies $\{\hat{X}_m\}_{m=1}^M$. We first introduce our problem formulation in terms of minimizing the average content transfer delay $D_{avg}(n)$ for the uncoded caching following the sequential reception strategy in Section~\ref{section:321}. Then, we solve the optimization problem and present the order-optimal cache allocation strategy under our network model. Finally, we present the minimum $D_{avg}(n)$ and the corresponding maximum $\lambda(n)$ using the order-optimal cache allocation strategy.
\subsection{Problem Formulation}~\label{section:51}
It is observed from Lemma~\ref{le:otd} that the average content transfer delay $D_{avg}(n)$ depends completely on the caching allocation strategy $\{X_m\}_{m=1}^M$. Among all the cache allocation strategies, the optimal one will be the one that has the minimum $D_{avg}(n)$. It is intuitive that there is no need to cache more than $a(n)^{-1}$ replicas of the subpacket $i$ of content object $m \in \mathcal{M}$ over the network for the uncoded sequential reception case due to the term $\min\left( 1, a(n) X_{m} \right)$ in~\eqref{eq:davg_s} of Lemma~\ref{le:otd}. Thus, we modify \eqref{eq:Cons2b} and impose the following individual caching constraints:
\begin{align} \label{eq:Cons2} 
1 \leq X_{m} \leq a(n)^{-1}
\end{align}
for all $m \in \mathcal{M}$.
Now, from Lemma~\ref{le:otd} and the caching constraints in \eqref{eq:Cons1} and \eqref{eq:Cons2}, the optimal cache allocation strategy $\{\hat{X}_{m}\}_{m = 1}^M$ for the uncoded sequential reception scenario can thus be the solution to the following optimization problem: 
\begin{subequations} \label{eq:op_s}
\begin{align} \label{eq:of_s}
\min_{\left\{X_{m}\right\}_{m \in \mathcal{M}}}\sum_{m=1}^M \frac{Kp^{pop}_{m}}{a(n) X_{m}}
\end{align}
subject to 
 \begin{align} \label{eq:c1}
  \sum_{m=1}^MKX_{m} \leq Sn,
\end{align}
\begin{align} \label{eq:c2}
1 \leq X_{m} \leq a(n)^{-1}.
\end{align}
\end{subequations}
Note that the number of replicas $X_{m}$ of content object $m$ stored at the mobile nodes is an integer variable, which makes the optimization problem~\eqref{eq:op_s} non-convex and thus intractable. However, as long as scaling laws are concerned, the discrete variables $X_{m}$ for $m \in \mathcal{M}$ can be relaxed to real numbers in $[1, \infty)$ so that the objective function in \eqref{eq:op_s} becomes convex and differentiable.
\subsection{Order-Optimal Cache Allocation Strategy}~\label{section:52}
We use the Lagrangian method to solve the problem in~\eqref{eq:op_s}. Before diving into the optimization problem, we will introduce some useful operating regimes. In particular, we divide the entire content domain $\mathcal{M}$ into the following regimes according to content $m\in \mathcal{M}$:
\begin{itemize}
    \item Regime $\text{I}^{(u)}$: $X_{m}=\Theta\left( a(n)^{-1} \right)$
    \item Regime $\text{II}^{(u)}$: $ X_{m} = o\left( a(n)^{-1} \right)$.
\end{itemize}
\noindent Let $\mathcal{I}^{(u)}_{1}$ and $\mathcal{I}^{(u)}_{2}$ be partitions of $\mathcal{M}$ that consist of content belonging to Regimes I\(^{(u)}\)  and II\(^{(u)}\), respectively. The Lagrangian function corresponding to \eqref{eq:op_s} by relaxing the $1 \leq X_{m}$ constraint is given by 
\begin{align}\label{eq:LF_s}
&\mathcal{L}\left(\!\left\{X_{m}\right\}_{m \in \mathcal{M}},\delta,\left\{\sigma_m\right\}_{m \in \mathcal{M}}\right)\! =\!  \sum_{m=1}^M\!\! \frac{K p^{pop}_{m}}{a(n)X_{m}}\!\nonumber \\ & +\! \delta\left(\sum_{m= 1}^{M}\!\!KX_{m}\!- Sn\!\right) + \sum_{m= 1}^{M}\!\!\sigma_m \left( X_{m}\!-\frac{1}{a(n)}\right),
\end{align}
\noindent where $\sigma_m, \delta \in \mathbb{R}$. The Karush-Kuhn-Tucker (KKT) conditions for \eqref{eq:op_s} are then given by
\begin{align}\label{eq:KKT1_s}
\frac{\partial \mathcal{L}\left(\left\{\hat{X}_{m}\right\}_{m \in \mathcal{M}},\hat{\delta},\left\{\hat{\sigma}_m\right\}_{m \in \mathcal{M}}\right) }{\partial \hat{X}_{m}}= 0,  
\end{align}
\begin{align}\label{eq:KKT2_s}
\hat{\sigma}_m \left( \hat{X}_{m}- a(n)^{-1} \right) =0 , 
\end{align}
\begin{align}\label{eq:KKT3_s}
\hat{\delta} \left(\sum_{m= 1}^{M} K\hat{X}_{m} - Sn \right) =0 ,
\end{align}
\begin{align*}
\hat{\delta} \geq 0 ,
\end{align*}
\begin{align*}
\hat{\sigma}_m \geq 0
\end{align*}
\noindent for all $m \in \mathcal{M}$, where $\hat{X}_m, \hat{\delta}$, and $\hat{\sigma}_m$ represent the optimized values. Let the content index $m^{(u)}_1\in \mathcal{I}^{(u)}_{2}$ denote the smallest content index belonging to Regime II\(^{(u)}\). In the following, we introduce a lemma that presents an important characteristic of the optimal cache allocation strategy $\left\{\hat{X}_{m}\right\}_{m=1}^M$ and  plays a vital role in solving \eqref{eq:op_s}.
\begin{lemma}\label{le:1_s}
The order-optimal cache allocation strategy denoted by $\left\{\hat{X}_{m}\right\}_{m=1}^M$ in \eqref{eq:op_s} is non-increasing with $m \in \mathcal{M}$.
\end{lemma}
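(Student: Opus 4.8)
The plan is to prove the claim by a direct exchange (rearrangement) argument rather than by grinding through the full KKT system, because the feasible set of \eqref{eq:op_s} is invariant under any permutation of the content indices. First I would rewrite the objective in \eqref{eq:of_s} as $\sum_{m=1}^M c_m/X_{m}$ with $c_m := K p^{pop}_{m}/a(n)$, and observe that since the Zipf weights $p^{pop}_{m}=m^{-\alpha}/H_{\alpha}(M)$ are \emph{strictly decreasing} in $m$ for every $\alpha>0$, the coefficients $c_m$ are strictly decreasing as well. The two constraints---the total cache budget $\sum_{m=1}^M K X_{m}\le Sn$ in \eqref{eq:c1} and the box constraint $1\le X_{m}\le a(n)^{-1}$ in \eqref{eq:c2}---depend only on the multiset of values $\{X_{m}\}_{m=1}^M$, so they are unchanged when these values are permuted across indices.

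Next I would argue by contradiction. Suppose the optimizer $\{\hat{X}_{m}\}_{m=1}^M$ violated monotonicity, i.e., there were indices $m_1<m_2$ with $\hat{X}_{m_1}<\hat{X}_{m_2}$. Define a competing allocation that swaps only these two values, $X'_{m_1}=\hat{X}_{m_2}$, $X'_{m_2}=\hat{X}_{m_1}$, and $X'_{m}=\hat{X}_{m}$ for all other $m$. By the permutation-invariance just noted, $\{X'_{m}\}$ remains feasible. Comparing objective values, every term cancels except the two affected ones, and the resulting decrease equals
\begin{align*}
\frac{K}{a(n)}\left(p^{pop}_{m_1}-p^{pop}_{m_2}\right)\left(\frac{1}{\hat{X}_{m_1}}-\frac{1}{\hat{X}_{m_2}}\right)>0,
\end{align*}
where both factors are strictly positive because $p^{pop}_{m_1}>p^{pop}_{m_2}$ (strict Zipf monotonicity) and $\hat{X}_{m_1}<\hat{X}_{m_2}$. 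Hence the swap strictly lowers the objective, contradicting the optimality of $\{\hat{X}_{m}\}$. Therefore no inverted pair can exist, and $\hat{X}_{m}$ must be non-increasing in $m$.

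Finally, I would note that this conclusion is entirely consistent with the KKT stationarity condition \eqref{eq:KKT1_s}, which yields $\frac{K p^{pop}_{m}}{a(n)\hat{X}_{m}^2}=\hat{\delta}K+\hat{\sigma}_m$: for content in Regime $\text{II}^{(u)}$ (where $\hat{\sigma}_m=0$ by complementary slackness \eqref{eq:KKT2_s}) this gives $\hat{X}_{m}=\sqrt{p^{pop}_{m}/(a(n)\hat{\delta})}$, strictly decreasing in $m$, while content in Regime $\text{I}^{(u)}$ sits at the common ceiling $a(n)^{-1}$. The exchange argument additionally certifies that the ceiling-valued (most popular) indices all precede the Regime $\text{II}^{(u)}$ indices, which is exactly the threshold structure captured by $m^{(u)}_1$. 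I do not expect a genuine obstacle here: the only points requiring care are verifying that feasibility is preserved under the swap---immediate from permutation-invariance---and keeping the sign of the objective gap correct, which hinges precisely on the strict monotonicity of the Zipf distribution and on $1/x$ being decreasing. Note also that ``non-increasing'' (rather than strictly decreasing) is the sharp statement, since the Regime $\text{I}^{(u)}$ entries are all equal to the ceiling.
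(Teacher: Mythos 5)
Your exchange argument is correct: the feasible set of \eqref{eq:op_s} is permutation-symmetric (the budget constraint \eqref{eq:c1} depends only on the sum, and the box constraints \eqref{eq:c2} are identical across indices), the swap preserves feasibility, and the objective gap $\frac{K}{a(n)}\bigl(p^{pop}_{m_1}-p^{pop}_{m_2}\bigr)\bigl(\frac{1}{\hat{X}_{m_1}}-\frac{1}{\hat{X}_{m_2}}\bigr)$ has the sign you claim, since the Zipf weights are strictly decreasing. This cleanly rules out any inverted pair and yields the lemma. It is, however, a genuinely different route from the paper's. The paper proves the lemma through the KKT system itself: it invokes complementary slackness \eqref{eq:KKT2_s} to get $\hat{\sigma}_m=0$ in Regime $\text{II}^{(u)}$, derives the closed form $\hat{X}_m \propto \sqrt{p^{pop}_m}$ (its equation for Regime $\text{II}^{(u)}$, which is monotone), and then compares multipliers to show every ceiling-valued index precedes $m^{(u)}_1$. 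Your argument is more elementary and more robust---it needs no differentiability, no constraint relaxation, and applies verbatim to any exact minimizer---whereas the paper's approach does double duty: the stationarity identities it establishes along the way (in particular the proportionality $\hat{X}_m \propto \sqrt{p^{pop}_m}$ and the threshold characterization) are exactly the ingredients reused in the proof of Proposition~\ref{th:oprep_s}. So if one adopted your proof of the lemma, the KKT computation would still have to be carried out afterwards to obtain \eqref{eq:oprep_s}; your approach buys simplicity and generality for the monotonicity claim itself, while the paper's buys economy across the two results. One small point worth making explicit in your write-up: a minimizer exists because the relaxed feasible set is compact (the box constraints) and the objective is continuous there, so the contradiction has something to contradict.
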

\begin{proof} Deferred to Appendix \ref{AppendixA_s}. 
\end{proof}
Lemma \ref{le:1_s} allows us to establish our first main result regarding the order-optimal cache allocation strategy for the uncoded case.
\begin{prop}\label{th:oprep_s}
Consider the content-centric mobile ad hoc network model employing subpacketization and following the uncoded sequential reception strategy in Section~{\em\ref{section:321}}. The order-optimal cache allocation strategy is then given by
\begin{equation} \label{eq:oprep_s} 
\hat{X}_m = \begin{cases}
    a(n)^{-1} &  m \in \left\{1, \cdots, m^{(u)}_1-1 \right\} \\
		 \frac{\sqrt{p^{pop}_m}}{\sum_{\widetilde{m}= m^{(u)}_1}^M \sqrt{p^{pop}_{\widetilde{m}}} } S^{(u)} &  m \in \left\{m^{(u)}_1, \cdots, M \right\}
\end{cases} 
\end{equation}
\noindent where $p^{pop}_m$ is given in~\eqref{eq:zipf},  $S^{(u)}= n - (m^{(u)}_1-1) a(n)^{-1}$, and the boundary between Regimes {\em I}\(^{(u)}\) and {\em II}\(^{(u)}\) is defined by content index $m^{(u)}_1$, which is given by
\begin{equation} \label{eq:am1_s}
m^{(u)}_1 =\Theta\left( \min\left\{M, \left(\frac{na(n)}{H_{\frac{\alpha}{2}}(M)}\right)^{2/\alpha}  \right\}\right),
\end{equation}
where
\begin{equation} \label{eq:H2} 
H_{\frac{\alpha}{2}}\left(M\right) = \begin{cases}
    \Theta\left(1 \right) & \alpha > 2 \\
    \Theta\left(\log M \right)              & \alpha =2 \\
		\Theta\left( M^{1-\frac{\alpha}{2}} \right)  & \alpha <2.
\end{cases} 
\end{equation}
\end{prop}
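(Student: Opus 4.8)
The plan is to solve the relaxed convex program in \eqref{eq:op_s} directly through its KKT system \eqref{eq:KKT1_s}--\eqref{eq:KKT3_s}, exploiting that after relaxing integrality and the lower bound $X_m\ge 1$ the objective is convex and differentiable, so the KKT conditions are both necessary and sufficient for the order-optimal point. First I would read off stationarity: differentiating \eqref{eq:LF_s} gives $-\frac{Kp^{pop}_m}{a(n)\hat X_m^2}+\hat\delta K+\hat\sigma_m=0$, which splits the content set into exactly the two advertised regimes. Whenever the cap is slack, complementary slackness \eqref{eq:KKT2_s} forces $\hat\sigma_m=0$, and stationarity yields the water-filling form $\hat X_m=\sqrt{p^{pop}_m/(a(n)\hat\delta)}\propto\sqrt{p^{pop}_m}$ (Regime $\mathrm{II}^{(u)}$); whenever the cap is tight we simply have $\hat X_m=a(n)^{-1}$ (Regime $\mathrm{I}^{(u)}$). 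Since $p^{pop}_m\propto m^{-\alpha}$ is decreasing, Lemma~\ref{le:1_s} guarantees that $\{\hat X_m\}$ is non-increasing, so the capped indices form a prefix $\{1,\dots,m^{(u)}_1-1\}$ and the water-filled indices form the complementary suffix, justifying the two-line structure of \eqref{eq:oprep_s}.

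Next I would pin down the proportionality constant in Regime $\mathrm{II}^{(u)}$ from the budget. Because the objective is strictly decreasing in each $X_m$, the total-cache constraint \eqref{eq:c1} is active, i.e. $\hat\delta>0$ and $\sum_m K\hat X_m=Sn$ with $S=\Theta(K)$, so that $\sum_m\hat X_m=\Theta(n)$. Allocating $a(n)^{-1}$ to each of the $m^{(u)}_1-1$ head indices consumes $(m^{(u)}_1-1)a(n)^{-1}$ of the budget, leaving $S^{(u)}=n-(m^{(u)}_1-1)a(n)^{-1}$ for the tail. Imposing $\sum_{m\ge m^{(u)}_1}\hat X_m=S^{(u)}$ on the form $\hat X_m\propto\sqrt{p^{pop}_m}$ fixes the constant and reproduces the second branch of \eqref{eq:oprep_s}, with normalization $\sum_{\widetilde m=m^{(u)}_1}^{M}\sqrt{p^{pop}_{\widetilde m}}$.

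To locate the boundary $m^{(u)}_1$ I would use order-wise continuity at the interface: the water-filling value must meet the cap at the first tail index, $\hat X_{m^{(u)}_1}=\Theta(a(n)^{-1})$. Substituting the Zipf weights $\sqrt{p^{pop}_m}=m^{-\alpha/2}/\sqrt{H_\alpha(M)}$ and using $\sum_m\sqrt{p^{pop}_m}=H_{\alpha/2}(M)/\sqrt{H_\alpha(M)}$ reduces this to $n\,(m^{(u)}_1)^{-\alpha/2}/H_{\alpha/2}(M)=\Theta(a(n)^{-1})$, which solves to $m^{(u)}_1=\Theta\big((na(n)/H_{\alpha/2}(M))^{2/\alpha}\big)$; capping at $M$, since the tail cannot start beyond the full library, produces the $\min\{M,\cdot\}$ in \eqref{eq:am1_s}. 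The three cases of $H_{\alpha/2}(M)$ in \eqref{eq:H2} then follow from the standard Riemann-sum estimate already recorded in \eqref{eq:H} with $\alpha$ replaced by $\alpha/2$.

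The main obstacle is the budget-consistency bookkeeping underlying this crossing argument: it is only legitimate once one verifies that the head consumes a sub-dominant share of the cache, so that $S^{(u)}=\Theta(n)$ and the tail normalization coincides in order with the constant obtained from the unconstrained water-filling solution. This requires a case analysis driven by the three regimes of $H_{\alpha/2}(M)$ in \eqref{eq:H2} (equivalently, whether the head or the tail dominates $\sum_m\sqrt{p^{pop}_m}$), together with a check that $(m^{(u)}_1-1)a(n)^{-1}=o(n)$ whenever $m^{(u)}_1<M$; the regime $m^{(u)}_1=\Theta(M)$ must be treated separately, since then $S^{(u)}$ is set directly by the active constraint \eqref{eq:c1}. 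Finally, convexity of the relaxed program certifies that the candidate $\{\hat X_m\}$ satisfying \eqref{eq:KKT1_s}--\eqref{eq:KKT3_s} is the order-optimal allocation, completing the argument.
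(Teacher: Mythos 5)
Your proposal is correct and follows essentially the same route as the paper's own proof: Lagrangian/KKT stationarity with complementary slackness yielding the water-filling form in Regime II\(^{(u)}\), Lemma~\ref{le:1_s} to force the capped indices into a prefix, the active budget constraint to fix the normalization $S^{(u)}=n-(m^{(u)}_1-1)a(n)^{-1}$, and an order-wise crossing condition $\hat{X}_{m^{(u)}_1}=\Theta\left(a(n)^{-1}\right)$ to locate the boundary. The ``budget-consistency'' obstacle you flag is exactly the step the paper resolves through its self-consistent equation \eqref{eq:m1h_s} (which keeps the tail sums $H_{\frac{\alpha}{2}}(M)-H_{\frac{\alpha}{2}}(m^{(u)}_1-1)$ and the residual budget $na(n)-(m^{(u)}_1-1)$ intact) followed by the case analysis in \eqref{eq:m1_a1}--\eqref{eq:m1_a2}, so carrying out your deferred verification amounts to reproducing that computation.
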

\begin{proof}
Deferred to Appendix \ref{AppendixD_s}.
\end{proof}
From Proposition~\ref{th:oprep_s}, it is observed that the order-optimal cache allocation strategy is partitioned into two parts. The first part consisting of highly popular content with indice $m <m^{(u)}_1$ is replicated $a(n)^{-1}$ times. The rest is the content with index $m \geq m^{(u)}_1$ for which the order-optimal cache allocation strategy is to monotonically decrease the number of replicas with $m$. In addition, the value of $m^{(u)}_1$ depends on the choice of $a(n)$ and the Zipf exponent $\alpha$.

 Next, based on our uncoded cache allocation strategy for the total caching constraint in~\eqref{eq:op_s}, we extend the strategy to satisfy the local caching constraints. Based on the solution $\{\hat{X}_{m}\}_{m=1}^M $ in Proposition~\ref{th:oprep_s}, the central server places replicas of the content in the cache of each node according to the replica allocation algorithm in~\cite[Appendix C]{alfano}, in which contents are considered in sequence and the algorithm is decomposed into $MK$ steps. The design of this algorithm is basically inspired by the well-known water-filling strategy. Each $(m,k)$th step (i.e., the $\left(k+(m-1)K\right)$th step) of the algorithm is responsible for caching the $\left\lceil \hat{X}_m\right\rceil$ replicas of the $k$th subpacket of content $m\in \mathcal{M}$. Here, $\left\lceil x\right\rceil$ denotes the ceiling function of $x$. More specifically, a set of $\left\lceil \hat{X}_m\right\rceil$ distinct nodes $\mathcal{N}_{m,k}^{(u)}$ is selected and a replica of the $k$th subpacket of content $m$ is assigned to each node in the set $\mathcal{N}^{(u)}_{m,k}$ at the $(m,k)$th step of the algorithm. In the first step (i.e., the $(1,1)$th step), $\left\lceil \hat{X}_1 \right\rceil$ nodes are randomly assigned to the set $\mathcal{N}_{1,1}^{(u)}$. In the subsequent process, at $(m,k)$th step, first all nodes are sorted in ascending order of the total number of subpackets cached by each node since the algorithm has been initiated, and then the top-$\left\lceil \hat{X}_m \right\rceil$ nodes from the sorted list are assigned to the set $\mathcal{N}_{m,k}^{(u)}$. In other words, a preference is given to the nodes to cache replicas in terms of the number of assigned subpackets to date. If there is a tie in the number of subpackets assigned to users' caches after sorting of each step, then a random node selection is made. The above steps are repeated $MK$ times until all the replicas of the content are assigned.
\begin{remark}\label{R:1} 
Due to the fact that $\sum_{m=1}^M K\left\lceil \hat{X}_m \right\rceil \leq2 \sum_{m=1}^M$ $ K \hat{X}_m \leq 2Sn$, it is shown that as far as the cache of each node is filled with replicas according to the above replica allocation algorithm, the proposed order-optimal cache allocation strategy in Proposition~\ref{th:oprep_s} can be extended to satisfy the property that the number of subpackets stored by each node (i.e., the storage capacity per node) is bounded by $2S$, which is given by $\Theta(K)$. Hence, our cache allocation strategy in Proposition~\ref{th:oprep_s} fulfills the local cache size constraints within a factor of 2. 
\end{remark}
In the next subsection, we characterize the optimized minimum average content transfer delay $D_{avg}(n)$ by adopting the order-optimal cache allocation strategy presented in Proposition~\ref{th:oprep_s} and analyze the impact of some key parameters, $K$, $M$, $a(n)$, and $\alpha$ on the order-optimal performance. 
\subsection{Order-Optimal Performance}~\label{section:54}
In this subsection, we compute the minimum average content transfer delay $D_{avg}(n)$ using the order-optimal cache allocation strategy obtained in Proposition~\ref{th:oprep_s}.
\begin{theorem}\label{th:delay_s} Consider a content-centric mobile ad hoc network model with subpacketization adopting the order-optimal cache allocation strategy  $\{\hat{X}_{m}\}_{m=1}^M$ in~\eqref{eq:oprep_s} and following the uncoded sequential reception strategy. Then, the minimum average content transfer delay $D_{avg}(n)$ is given by
\smallskip
\begin{equation} \label{eq:delay_s2} 
D_{avg}(n) = \Theta\left(\max\left\{K, \frac{K\left(H_{\frac{\alpha}{2}}(M)\right)^2}{na(n)H_{\alpha}(M)} \right\} \right), 
\end{equation}
\noindent where $K$ is the number subpackets of each content, $a(n)$ is the area in which a node can communicate with other nodes, and $H_\alpha(M)$ and $H_\frac{\alpha}{2}(M)$ are given in \eqref{eq:H} and \eqref{eq:H2}, respectively. 
\end{theorem}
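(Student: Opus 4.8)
The plan is to substitute the order-optimal cache allocation strategy $\{\hat{X}_m\}_{m=1}^M$ from Proposition~\ref{th:oprep_s} directly into the delay expression~\eqref{eq:davg_s} of Lemma~\ref{le:otd} and then evaluate the resulting sum by splitting it along the two operating regimes introduced before Proposition~\ref{th:oprep_s}. First I would separate $\sum_{m=1}^M \frac{K p^{pop}_m}{\min(1, a(n)X_m)}$ into a contribution from Regime $\text{I}^{(u)}$ (indices $m \in \mathcal{I}^{(u)}_{1}$, where $\hat{X}_m = a(n)^{-1}$ so that $\min(1, a(n)\hat{X}_m) = 1$) and a contribution from Regime $\text{II}^{(u)}$ (indices $m \in \mathcal{I}^{(u)}_{2}$, where $a(n)\hat{X}_m = o(1)$ so that $\min(1, a(n)\hat{X}_m) = \Theta(a(n)\hat{X}_m)$).

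For the Regime $\text{I}^{(u)}$ part, the contribution is $K \sum_{m=1}^{m^{(u)}_1-1} p^{pop}_m = O(K)$, since the popularities sum to at most one. Combined with the trivial lower bound $D_{avg}(n) \geq K \sum_{m=1}^M p^{pop}_m = \Omega(K)$, obtained by bounding $\min(1, a(n)X_m) \leq 1$ in~\eqref{eq:davg_s}, this pins down the first term $K$ inside the maximum in~\eqref{eq:delay_s2}.

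For the Regime $\text{II}^{(u)}$ part, I would plug in $\hat{X}_m = \frac{\sqrt{p^{pop}_m}}{\sum_{\widetilde{m}=m^{(u)}_1}^M \sqrt{p^{pop}_{\widetilde{m}}}} S^{(u)}$, which simplifies the per-term ratio $\frac{p^{pop}_m}{\hat{X}_m}$ into $\frac{\sqrt{p^{pop}_m}\,\sum_{\widetilde{m}} \sqrt{p^{pop}_{\widetilde{m}}}}{S^{(u)}}$, so that the whole Regime $\text{II}^{(u)}$ sum reduces to $\frac{K}{a(n)S^{(u)}}\left(\sum_{m=m^{(u)}_1}^M \sqrt{p^{pop}_m}\right)^2$. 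Using $\sqrt{p^{pop}_m} = m^{-\alpha/2}/\sqrt{H_\alpha(M)}$ and the Riemann-zeta estimate $\sum_{m=m^{(u)}_1}^M m^{-\alpha/2} = \Theta(H_{\frac{\alpha}{2}}(M))$ consistent with~\eqref{eq:H2}, the squared sum becomes $\Theta\!\left((H_{\frac{\alpha}{2}}(M))^2/H_\alpha(M)\right)$; substituting $S^{(u)} = \Theta(n)$ then yields the second term $\frac{K(H_{\frac{\alpha}{2}}(M))^2}{na(n)H_\alpha(M)}$. Taking the larger of the two regime contributions produces the claimed $\Theta(\max\{\cdot\})$.

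The main obstacle will be the bookkeeping around the boundary index $m^{(u)}_1$ and the remaining budget $S^{(u)} = n - (m^{(u)}_1-1)a(n)^{-1}$. I would need to verify that $S^{(u)} = \Theta(n)$ (equivalently that $(m^{(u)}_1-1)a(n)^{-1} = o(n)$, or at least stays bounded away from $n$), which is exactly what the choice of $m^{(u)}_1$ in~\eqref{eq:am1_s} is engineered to guarantee, and to confirm that the partial sum $\sum_{m=m^{(u)}_1}^M m^{-\alpha/2}$ is genuinely of order $H_{\frac{\alpha}{2}}(M)$ in every sub-regime of $\alpha$ relative to $2$. I would also treat separately the saturated case $m^{(u)}_1 = \Theta(M)$, in which Regime $\text{II}^{(u)}$ is essentially empty and the $\Theta(K)$ term dominates; a quick check that $\frac{K(H_{\frac{\alpha}{2}}(M))^2}{na(n)H_\alpha(M)} = O(K)$ in that case confirms that the maximum still reports $K$ and keeps the two cases consistent.
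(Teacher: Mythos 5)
Your proposal is correct and follows essentially the same route as the paper's own proof: split the sum in~\eqref{eq:davg_s} at the boundary index $m^{(u)}_1$, substitute $\hat{X}_m$ from~\eqref{eq:oprep_s} so the Regime $\text{II}^{(u)}$ part collapses to $\frac{K}{a(n)S^{(u)}}\bigl(\sum_{m=m^{(u)}_1}^M \sqrt{p^{pop}_m}\bigr)^2 = \Theta\bigl(K\bigl(H_{\frac{\alpha}{2}}(M)-H_{\frac{\alpha}{2}}(m^{(u)}_1-1)\bigr)^2/\bigl(H_\alpha(M)(na(n)-m^{(u)}_1)\bigr)\bigr)$, and then distinguish the cases $m^{(u)}_1=\Theta(M)$ and $m^{(u)}_1=o(M)$ exactly as the paper does. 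The bookkeeping items you flag (that $na(n)-m^{(u)}_1=\Theta(na(n))$ and that the tail sum is of order $H_{\frac{\alpha}{2}}(M)$ when $m^{(u)}_1=o(M)$, with the saturated case handled by the $\Theta(K)$ term dominating) are precisely the facts the paper's proof relies on implicitly.
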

\smallskip
\begin{proof}
Deferred to Appendix \ref{AppendixE_s}.
\end{proof}
From Theorems~\ref{th:TD} and~\ref{th:delay_s}, the maximum achievable per-node throughput $\lambda(n)$ is given by
\begin{equation}\label{eq:pnt_s}
\lambda(n)= \Theta\left(\min\left\{\frac{1}{na(n)K} , \frac{H_{\alpha}(M)}{K\left(H_{\frac{\alpha}{2}}(M)\right)^2} \right\}\right).
\end{equation}
If the content follows the Zipf distribution with exponent $\alpha > 2$, then the best delay $D_{avg}(n)=\Theta(K)$ and the corresponding throughput $\lambda(n)=\Theta\left(\frac{1}{na(n)K}\right)$ are achieved. When $\alpha \leq 2$, the minimum delay $D_{avg}(n)$ and the corresponding throughput $\lambda(n)$ start to scale with $a(n)$, $K$, and $M$. In the next section, we characterize the minimum average content transfer delay $D_{avg}(n)$ and the corresponding per-node throughput $\lambda(n)$ under our network model for the MDS-coded caching case by presenting the order-optimal cache allocation strategy.
\section{Order-Optimal MDS-coded Caching in Mobile Networks with Subpacketization }~\label{section:6}
In this section, we propose the order-optimal MDS-coded cache allocation strategies $\{\hat{r}_m\}_{m=1}^M$ to characterize the order-optimal average content transfer delay $D_{avg}(n)$ and the corresponding maximum per-node throughput $\lambda(n)$ of the cache-enabled mobile ad hoc network employing subpacketization. We first introduce our problem formulation in terms of minimizing $D_{avg}(n)$ for the MDS-coded caching following the random reception strategy in Section~\ref{section:322}. Then, we solve the optimization problem and propose the order-optimal cache allocation strategy $\left\{\hat{r}_m\right\}_{m=1}^M$ under our network model. Finally, we present the minimum $D_{avg}(n)$ and the corresponding maximum $\lambda(n)$ using the order-optimal cache allocation strategy.
\subsection{Problem Formulation}~\label{section:61}
It can be seen that there is no need to cache more than $a(n)^{-1}+K$ MDS-coded subpackets of content object $m \in \mathcal{M}$ over the network for the MDS-coded random reception case due to the term $\min\left( 1, (r_m-j) a(n) \right)$ in~\eqref{eq:davg_c} of Lemma~\ref{le:otd}. Thus, we modify \eqref{eq:Cons2cb} and impose the following individual caching constraints:
\begin{equation} \label{eq:Cons2c} 
K \leq r_{m} \leq a(n)^{-1}+K
\end{equation}
for all $m \in \mathcal{M}$. Now, from Lemma~\ref{le:otd} and the caching constraints in \eqref{eq:Cons1c} and \eqref{eq:Cons2c}, the optimal cache allocation strategy $\left\{\hat{r}_{m}\right\}_{m = 1}^M$ for the MDS-coded random reception scenario can thus be the solution to the following optimization problem: 
\begin{subequations} \label{eq:opc}
\begin{equation} \label{eq:ofc}
\min_{\left\{r_{m}\right\}_{m \in \mathcal{M}}}\sum_{m=1}^M  \sum_{j=0}^{K-1} \frac{p^{pop}_{m}}{\min\left( 1, (r_m-j) a(n) \right)}
\end{equation}
subject to
\begin{equation} \label{eq:c1c}
 \sum_{m=1}^Mr_{m} \leq Sn,
\end{equation}
\begin{equation} \label{eq:c2c}
K \leq r_{m} \leq a(n)^{-1}+K.
\end{equation}
\end{subequations}
Similarly as in uncoded caching case, we relax the discrete variables $r_{m}$ for $m \in \mathcal{M}$ to real numbers in $[K, \infty)$ so that the objective function in \eqref{eq:opc} becomes convex and differentiable.
\subsection{Order-Optimal Cache Allocation Strategy}~\label{section:62}
The objective function in \eqref{eq:ofc} contains a $\min$ function in the denominator, which makes the optimization problem intractable. Thus, we first simplify the objective function in \eqref{eq:ofc} and then solve the simplified optimization problem to obtain the order-optimal cache allocation strategy.
\subsubsection{Simplifying Objective Function}~\label{section:621} We simplify the objective function in \eqref{eq:ofc} by dividing the entire content domain $\mathcal{M}$ into the following three regimes:
\begin{itemize}
    \item Regime $\text{I}^{(c)}$: $r_{m}=\Omega\left( a(n)^{-1} \right)$
    \item Regime $\text{II}^{(c)}$: $r_{m} = o\left( a(n)^{-1} \right)$ and $\Omega\left( K^{1+\epsilon}\right)$
		\item Regime $\text{III}^{(c)}$: $r_{m} = o\left( K^{1+\epsilon}\right)$ and $\Omega(K)$,
\end{itemize}
\noindent where $\epsilon > 0$ is an arbitrarily small constant. Let $\mathcal{I}^{(c)}_{1}$, $\mathcal{I}^{(c)}_{2}$, and $\mathcal{I}^{(c)}_{3}$ be partitions of $\mathcal{M}$ consisting of content objects belonging to Regimes I\(^{(c)}\),  II\(^{(c)}\),  and III\(^{(c)}\), respectively. Now, characterize the transfer delay for each content $m \in \mathcal{M}$ according to the three regimes to simplify the objective function in \eqref{eq:ofc}. \\
\noindent \textbf{Transfer Delay for Content \(m\in\mathcal{I}^{(c)}_{1}\):}
In Regime I\(^{(c)}\), let $q_m$ be the integer such that $0 \leq q_m \leq K-1$, $(r_m-q_m) a(n) \geq 1$,  and $(r_m-q_m -1)a(n)<1$. Now, the transfer delay for each content $m \in \mathcal{I}^{(c)}_{1}$ is given by
\begin{align*} 
\sum_{j=0}^{K-1} &\frac{1}{\min\left( 1,(r_m-j)a(n) \right)}\\&= \sum_{j=0}^{q_m}1  +\sum_{j=q_m+1}^{K-1} \frac{1 }{(r_m-j) a(n) }\\& = (q_m+1 ) + \frac{1}{a(n)}\log \left(\frac{r_m-q_m-1}{r_m-K}\right),
\end{align*}
\noindent where the second equality holds due to the harmonic series. By definition of $q_m$, we have  $r_m-q_m-1= \Theta \left( a(n)^{-1}\right)$, which gives us  
\begin{align} \label{eq:dm_r1} 
\sum_{j=0}^{K-1} &\frac{1}{\min\left( 1, (r_m-j) a(n) \right)} \nonumber
\\&= r_m- a(n)^{-1} + a(n)^{-1} \log\left(\!\frac{\!a(n)^{-1}}{r_m - K} \right). 
\end{align}
\noindent Let $z=\!\left(\!r_m\!-K\!-a(n)^{-1}\!\right)/a(n)^{-1}$. Then, it follows that $\log\left(\frac{a(n)^{-1}}{r_m-K}\!\right)=-\log\left(1+z\right)$ and $ \log\left(1+ z\right) = z + O\left(z^2\right)$ due to $z=o(1)$ in Regime I\(^{(c)}\). This finally results in
\begin{equation} \label{eq:davgc1} 
\sum_{j=0}^{K-1} \frac{1}{\min\left( 1, (r_m-j) a(n) \right)} =   \Theta\left(K\right)   \hspace{0.5cm}\textrm{for } m \in \mathcal{I}^{(c)}_{1}.
\end{equation}
\noindent \textbf{Transfer Delay for Content \(m\in\mathcal{I}^{(c)}_{2}\):}
In Regime II\(^{(c)}\), the transfer delay for each content $m \in \mathcal{I}^{(c)}_{2}$ is given by
\begin{align*}
\sum_{j=0}^{K-1} \frac{1}{\min\left( 1, (r_m-j) a(n) \right)} &=  \sum_{j=0}^{K-1} \frac{1}{ (r_m-j) a(n) }  
\\& =  \frac{1}{a(n)} \log\left(\frac{r_m}{r_m-K} \right).
\end{align*}
\noindent Let $z= K/r_m$. Then, it follows that $\log\left(\frac{ r_m }{r_m - K} \right)=-\log\left(1-z\right)$ and $\log\left(1-z\right)= -z + O\left(z^2\right)$ due to $z=o(1)$ in Regime II\(^{(c)}\). This finally results in
\begin{equation} \label{eq:davgc2}
\sum_{j=0}^{K-1}\!\! \frac{1}{\min\left( 1, (r_m-j) a(n) \right)}=  \Theta\left(\frac{K}{a(n)r_m}\!\right) \hspace{0.3cm} \textrm{for }  m \in \mathcal{I}^{(c)}_{2}.
\end{equation}
\noindent \textbf{Transfer Delay for Content \(m\in\mathcal{I}^{(c)}_{3}\):}
 In Regime III\(^{(c)}\), the transfer delay for each content $m \in \mathcal{I}^{(c)}_{3}$ is given by
\begin{align}\label{eq:dm_r3} 
\sum_{j=0}^{K-1} \frac{1}{\min\left( 1, (r_m-j) a(n) \right)} &=  \sum_{j=0}^{K-1} \frac{1 }{(r_m-j) a(n) } \nonumber
\\ &= \frac{1}{a(n)} \left(\! \log r_m-  \log(r_m-K) \!\right).
\end{align}
\noindent In the regime, we have $r_{m} = o\left( K^{1+\epsilon}\right)$ and $\Omega(K)$ for an arbitrarily small $\epsilon>0$. Thus, it follows that $\log r_m -  \log(r_m-K) = \Theta(\log r_m )$ $=$ $\Theta(\log K)$, which results in
\begin{align} \label{eq:davgc3} 
\sum_{j=0}^{K-1} \frac{1}{\min\left( 1, (r_m-j) a(n) \right)} = \Theta\left(\frac{\log K}{a(n)} \right)  \hspace{0.3cm} \textrm{for }  m \in \mathcal{I}^{(c)}_{3}.
\end{align}
Now, using \eqref{eq:davgc1}, \eqref{eq:davgc2}, and \eqref{eq:davgc3}, we can establish the following equivalent optimization problem to the original problem in \eqref{eq:opc}:
\begin{align}\label{eq:opsc}
\min_{\left\{r_{m}\right\}_{m \in \mathcal{M}}}\!\!\left(\!\sum_{m \in \mathcal{I}^{(c)}_{1}}\!\!\!\!Kp^{pop}_{m}\!+\!\!\!\!\sum_{m\in \mathcal{I}^{(c)}_{2}}\!\!\!\frac{p^{pop}_{m}K}{a(n)r_m}\!+\!\!\!\sum_{m \in \mathcal{I}^{(c)}_{3}}\!\!\!\!\frac{p^{pop}_{m}\log K}{a(n)}\!\!\right)
\end{align}
subject to
\begin{align*}
 \sum_{m=1}^Mr_{m} \leq Sn,
\end{align*}
\begin{align*}
 K \leq r_{m} \leq a(n)^{-1}+K.
\end{align*}
\subsubsection{Solving the Simplified Optimization Problem}~\label{section:622}
The Lagrangian function corresponding to \eqref{eq:opsc} is given by 
\begin{align*} \label{eq:LFc}
& \mathcal{L}\left(\left\{r_{m}\right\}_{m \in \mathcal{M}},\delta,\left\{\sigma_m\right\}_{m \in \mathcal{M}}, \left\{\mu_m\right\}_{m \in \mathcal{M}}\right) = \sum_{m \in \mathcal{I}^{(c)}_{1}} K p^{pop}_{m} \\& +\sum_{m \in \mathcal{I}^{(c)}_{2}} \frac{p^{pop}_{m}K}{a(n)r_m}+ \!\!\sum_{m \in \mathcal{I}^{(c)}_{3}}\frac{p^{pop}_{m}\log K}{a(n)}\!\!   + \delta \left(\sum_{m= 1}^{M} r_{m} - Sn \right) \\ & + \sum_{m= 1}^{M} \sigma_m \left( K- r_{m} \right) + \sum_{m= 1}^{M} \mu_m \left( r_{m} - a(n)^{-1}-K\right),
\end{align*}
\noindent where $\mu_m, \sigma_m, \delta \in \mathbb{R}$. The KKT conditions for \eqref{eq:opsc} are then given by
\begin{equation}\label{eq:KKT1_c}
\frac{\partial \mathcal{L}\left(\left\{\hat{r}_{m}\right\}_{m \in \mathcal{M}},\hat{\delta}, \left\{\hat{\mu}_m \right\}_{m \in \mathcal{M}}, \left\{\hat{\sigma}_m\right\}_{m \in \mathcal{M}}\right) }{\partial \hat{r}_{m}}= 0,  
\end{equation}
\begin{equation}\label{eq:KKT2_c}
\hat{\sigma}_m \left( K- \hat{r}_{m} \right) =0 , 
\end{equation}
\begin{equation}\label{eq:KKT6_c}
\hat{\mu}_m \left( \hat{r}_{m} - a(n)^{-1}- K\right) =0 , 
\end{equation}
\begin{equation}\label{eq:KKT3_c}
\hat{\delta} \left(\sum_{m= 1}^{M} \hat{r}_{m} - Sn \right) =0 ,
\end{equation}
\begin{equation*}\label{eq:KKT4_c}
\hat{\delta} \geq 0 ,
\end{equation*}
\begin{equation*}\label{eq:KKT5_c}
\hat{\sigma}_m \geq 0 , 
\end{equation*}
\begin{equation*}\label{eq:KKT7_c}
\hat{\mu}_m \geq 0  
\end{equation*}
\noindent for all $m \in \mathcal{M}$, where $\hat{r}_m, \hat{\delta}$, $\hat{\mu}_m$, and $\hat{\sigma}_m$ represent the optimized values. Let the content indice $m^{(c)}_1\in \mathcal{I}^{(c)}_{2}$ and $m^{(c)}_2\in \mathcal{I}^{(c)}_{3}$ denote the smallest content indice belonging to Regimes II\(^{(c)}\) and III\(^{(c)}\), respectively. In the following, we introduce a lemma that presents an important characteristic of the optimal cache allocation strategy $\left\{\hat{r}_{m}\right\}_{m=1}^M$.
\begin{lemma}\label{le:1_c}
The optimal cache allocation strategy denoted by $\left\{\hat{r}_{m}\right\}_{m=1}^M$ in \eqref{eq:opsc} is non-increasing with $m \in \mathcal{M}$.
\end{lemma}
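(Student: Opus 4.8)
The plan is to prove the monotonicity by an \emph{exchange (rearrangement) argument}, exactly in the spirit of Lemma~\ref{le:1_s} for the uncoded case. First I would collapse the three-regime objective in~\eqref{eq:opsc} into the single weighted form $\sum_{m=1}^M p^{pop}_m\, d(\hat r_m)$, where the per-content transfer delay (stripped of its popularity weight) is the piecewise function
\begin{equation*}
d(r)=\begin{cases}
\Theta(K) & r=\Omega\!\left(a(n)^{-1}\right),\\[2pt]
\dfrac{K}{a(n)r} & r=o\!\left(a(n)^{-1}\right),\ r=\Omega\!\left(K^{1+\epsilon}\right),\\[2pt]
\dfrac{\log K}{a(n)} & K\le r = o\!\left(K^{1+\epsilon}\right),
\end{cases}
\end{equation*}
as read off from~\eqref{eq:davgc1},~\eqref{eq:davgc2}, and~\eqref{eq:davgc3}. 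The crucial point is that the regime to which a content belongs is determined \emph{solely by the value} of its allocation $r$, so $d(\cdot)$ is a genuine function of $r$ alone and the objective decouples across contents once the values $\{\hat r_m\}$ are fixed.

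The key structural fact I would establish next is that $d(r)$ is \emph{non-increasing} in $r$ in the order sense. Inside Regime~II\(^{(c)}\) this is immediate since $K/(a(n)r)$ is decreasing in $r$. At the boundaries I would match the orders: at $r=\Theta(a(n)^{-1})$ the Regime~II\(^{(c)}\) cost equals $K/(a(n)\cdot a(n)^{-1})=K$, which agrees with the $\Theta(K)$ cost of Regime~I\(^{(c)}\); and at $r=\Theta(K^{1+\epsilon})$ the Regime~II\(^{(c)}\) cost is $\Theta\!\left(1/(a(n)K^{\epsilon})\right)$, which is $o\!\left(\log K/a(n)\right)$, i.e.\ strictly below the Regime~III\(^{(c)}\) cost. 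Hence $d$ decreases (weakly) as $r$ grows across all three regimes.

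With monotonicity of $d$ in hand, I would argue by contradiction. Suppose the order-optimal $\{\hat r_m\}$ is not non-increasing, so that there exist indices $m_1<m_2$ with $\hat r_{m_1}<\hat r_{m_2}$. Define a new feasible point by swapping these two values, $r'_{m_1}=\hat r_{m_2}$ and $r'_{m_2}=\hat r_{m_1}$, leaving all other allocations unchanged. The swap leaves $\sum_m r_m$ unchanged, so~\eqref{eq:c1c} still holds, and since both values already lay in $[K,\,a(n)^{-1}+K]$ the box constraint~\eqref{eq:c2c} is preserved. The change in the objective (old value minus new value) is
\begin{equation*}
\big(p^{pop}_{m_1}-p^{pop}_{m_2}\big)\big(d(\hat r_{m_1})-d(\hat r_{m_2})\big),
\end{equation*}
which is nonnegative because $p^{pop}_{m_1}>p^{pop}_{m_2}$ (the Zipf weights in~\eqref{eq:zipf} are strictly decreasing in $m$) and $d(\hat r_{m_1})\ge d(\hat r_{m_2})$ (by $\hat r_{m_1}<\hat r_{m_2}$ and the monotonicity of $d$). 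Thus the swap does not increase the objective, so an order-optimal allocation can always be taken to be non-increasing, which establishes the lemma.

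I expect the main obstacle to be the boundary bookkeeping in the second step: because $d$ is defined only up to order and changes functional form at the two regime boundaries, one must verify carefully that the three pieces ``glue'' into a globally non-increasing profile, and that the swap in the third step moves each content consistently into the regime dictated by its new value. Everything else---feasibility of the swap and the sign of the objective change---follows from the rearrangement inequality together with the strict monotonicity of the Zipf distribution.
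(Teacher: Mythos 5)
Your proof is correct, but it takes a genuinely different route from the paper's. The paper proves Lemma~\ref{le:1_c} with the KKT machinery it has already set up for \eqref{eq:opsc}: within Regime~$\text{II}^{(c)}$, the stationarity condition \eqref{eq:KKT1_c} plus complementary slackness yield the explicit profile \eqref{eq:opxr2_c} (i.e., $\hat r_m \propto \sqrt{p^{pop}_m}$, hence non-increasing), the Regime~$\text{I}^{(c)}$ and $\text{III}^{(c)}$ allocations are pinned to the constants $a(n)^{-1}$ and $K$, and the regimes are then ordered by expressing the single multiplier $\hat\delta$ through the stationarity conditions of contents lying in different regimes, which forces $p^{pop}_j > p^{pop}_{m^{(c)}_1}$ for $j\in\mathcal{I}^{(c)}_1$ and $p^{pop}_l > p^{pop}_{m^{(c)}_2}$ for $l\in\mathcal{I}^{(c)}_2$, i.e., $j<m^{(c)}_1<m^{(c)}_2$. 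You replace all of this with a rearrangement argument resting on two facts: the per-content cost $d(r)$ is non-increasing in $r$ (your boundary bookkeeping is the right thing to check, and it does work out: $K/(a(n)r)$ matches $\Theta(K)$ at $r=\Theta\left(a(n)^{-1}\right)$ and is $o\left(\log K/a(n)\right)$ at $r=\Theta\left(K^{1+\epsilon}\right)$, so the three pieces glue into a globally monotone profile), and the Zipf weights are strictly decreasing in $m$, so any inversion can be swapped away without increasing the objective or violating \eqref{eq:c1c} and \eqref{eq:c2c}. Your approach is more elementary and more general---it needs no convexity or differentiability (welcome, since the piecewise objective is not differentiable at the regime boundaries) and works verbatim for any non-increasing popularity law, not just Zipf---whereas the paper's approach earns its keep because its intermediate product \eqref{eq:opxr2_c} is exactly what the proof of Proposition~\ref{th:oprep_c} subsequently uses to locate $m^{(c)}_1$ and $m^{(c)}_2$ and to obtain the $\sqrt{p^{pop}_m}$ shape; with your lemma proof, that KKT computation would still need to be carried out there. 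One caveat worth stating explicitly: your swap argument (with its weak inequality) shows that \emph{some} order-optimal allocation is non-increasing, i.e., the lemma holds ``without loss of generality''; since ties in the objective are possible, this is marginally weaker than asserting that every optimizer is monotone, but it is all that the subsequent results actually require, and it matches the order-sense level of rigor at which the paper itself operates.
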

\begin{proof} 
 Deferred to Appendix \ref{AppendixA_c}.
\end{proof}
Lemma \ref{le:1_c} allows us to establish the second main result regarding the optimal cache allocation strategy for the MDS-coded caching scenario.
\begin{prop}\label{th:oprep_c}
Consider the content-centric mobile ad hoc network model employing subpacketization and following the MDS-coded random reception strategy in Section~{\em\ref{section:322}}. The order-optimal cache allocation strategy is given by
\begin{equation} \label{eq:oprep_c} 
\hat{r}_m = \begin{cases}
    a(n)^{-1} &  m \in \left\{1, \cdots, m^{(c)}_1-1 \right\} \\
    \frac{\sqrt{p^{pop}_m}}{\sum_{\widetilde{m}= m^{(c)}_1}^{m^{(c)}_2-1} \sqrt{p^{pop}_{\widetilde{m}}} } S^{(c)} &  m \in \left\{m^{(c)}_1, \cdots, m^{(c)}_2-1 \right\} \\
		 K &  m \in \left\{m^{(c)}_2, \cdots, M \right\},
\end{cases} 
\end{equation}
where $p^{pop}_m$ is given in \eqref{eq:zipf}, $ S^{(c)}= Sn - (m^{(c)}_1-1)a(n)^{-1}- (M-m^{(c)}_2+1)K $, and the boundaries between any two regimes are defined by content indice $m^{(c)}_1$ and $m^{(c)}_2$, which are given by
\begin{equation} \label{eq:am2_c} 
m^{(c)}_2\! =\!\! \begin{cases}
		\Theta\left(\min\left\{M,(n-M)^{\frac{2}{\alpha}} \right\}\right) &\!\! \alpha > 2 \\
		\Theta\left( \min\left\{M, (n-M)\left(\frac{a(n)^{-1}}{K}\right)^{\frac{2}{\alpha}-1} \right\}\right)  & \!\!\alpha \leq 2
\end{cases} 
\end{equation}
\noindent and 
\begin{equation} \label{eq:am1_c} 
m^{(c)}_1 = \begin{cases}	
		\Theta\left(\min\left\{M,\left(\frac{K(n-M)}{a(n)^{-1}}\right)^{\frac{2}{\alpha}}\!\!\right\}\right) &\!\! \alpha > 2 \\
		\Theta\left( \min\left\{M, (n-M)Ka(n) \right\} \right)  & \alpha \leq 2,
\end{cases} 
\end{equation}
respectively.
\end{prop}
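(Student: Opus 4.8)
The plan is to solve the relaxed, simplified program in \eqref{eq:opsc} in the same spirit as Proposition~\ref{th:oprep_s}, but now with the three index sets $\mathcal{I}^{(c)}_1,\mathcal{I}^{(c)}_2,\mathcal{I}^{(c)}_3$. First I would use Lemma~\ref{le:1_c}: since the optimizer $\{\hat r_m\}$ is non-increasing in $m$, and since the three regimes are ordered by the size of $r_m$ (from $\Theta(a(n)^{-1})$ down to $\Theta(K)$), each regime must occupy a contiguous block of content indices. This is exactly what lets us record the solution with two break points $m^{(c)}_1\le m^{(c)}_2$, so that $\{1,\dots,m^{(c)}_1-1\}\subseteq\mathcal{I}^{(c)}_1$, $\{m^{(c)}_1,\dots,m^{(c)}_2-1\}\subseteq\mathcal{I}^{(c)}_2$, and $\{m^{(c)}_2,\dots,M\}\subseteq\mathcal{I}^{(c)}_3$.

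Next I would read the per-regime shape of $\hat r_m$ off the KKT system \eqref{eq:KKT1_c}--\eqref{eq:KKT3_c}. On $\mathcal{I}^{(c)}_1$ and $\mathcal{I}^{(c)}_3$ the objective in \eqref{eq:opsc} does not depend on $r_m$ (its summands there are $Kp^{pop}_m$ and $p^{pop}_m\log K/a(n)$), so, because the budget multiplier is strictly positive --- any admissible increase of a variable in $\mathcal{I}^{(c)}_2$ strictly lowers the cost, which forces \eqref{eq:KKT3_c} to bind --- cache is conserved by driving those variables to the cheap edge of their regime: $\hat r_m=a(n)^{-1}$ on $\mathcal{I}^{(c)}_1$ and $\hat r_m=K$ on $\mathcal{I}^{(c)}_3$. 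On $\mathcal{I}^{(c)}_2$ neither box bound is tight, so $\hat\sigma_m=\hat\mu_m=0$ by \eqref{eq:KKT2_c} and \eqref{eq:KKT6_c}, and the stationarity condition collapses to $-p^{pop}_mK/(a(n)\hat r_m^2)+\hat\delta=0$, i.e. $\hat r_m\propto\sqrt{p^{pop}_m}$. Normalizing this against the budget that survives the two clamped blocks then gives the middle branch of \eqref{eq:oprep_c} together with $S^{(c)}=Sn-(m^{(c)}_1-1)a(n)^{-1}-(M-m^{(c)}_2+1)K$.

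It then remains to pin down the two break points, and this is where the real work lies. I would impose the two order-level matching conditions $\hat r_{m^{(c)}_1}=\Theta(a(n)^{-1})$ and $\hat r_{m^{(c)}_2}=\Theta(K)$ --- the water-filling branch meeting the upper and lower clamp levels --- and solve them jointly with the budget identity for $S^{(c)}$. Substituting $p^{pop}_m$ from \eqref{eq:zipf} turns the normalizing denominator into a truncated sum $\sum_{\widetilde m}\widetilde m^{-\alpha/2}$, whose order I would evaluate with the generalized-harmonic asymptotics in \eqref{eq:H2}; the qualitative change of this sum at $\alpha=2$ is precisely what splits \eqref{eq:am1_c} and \eqref{eq:am2_c} into their $\alpha>2$ and $\alpha\le2$ cases. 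A convenient reduction is to divide the two matching conditions, which cancels $S^{(c)}$ and immediately yields the ratio $m^{(c)}_2/m^{(c)}_1=\Theta\!\big((a(n)^{-1}/K)^{2/\alpha}\big)$; the budget identity then fixes the absolute scale, and the outer $\min\{M,\cdot\}$ appears whenever an unconstrained break point would exceed $M$, in which case the corresponding regime is empty and the index saturates at $M$.

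The step I expect to be the main obstacle is the self-consistency of this last computation. The quantities $m^{(c)}_1$, $m^{(c)}_2$, and $S^{(c)}$ are mutually coupled --- the surviving budget depends on where the two clamped blocks terminate, while those terminations depend on the surviving budget through the water-filling level --- and the coupling has to be resolved separately in the two $\alpha$-regimes because the truncated $\alpha/2$-sum is controlled by a different endpoint in each. I would organize this by first identifying, in each case, which of the three blocks carries the dominant $\Theta(Sn)=\Theta(Kn)$ share of the budget (so that $S^{(c)}$ is known up to order), then substituting into the matching conditions to obtain \eqref{eq:am1_c}--\eqref{eq:am2_c}, and finally verifying a posteriori that the resulting $m^{(c)}_1,m^{(c)}_2$ are consistent with the regime definitions --- in particular that the content assigned to $\mathcal{I}^{(c)}_2$ indeed stays within the $o(a(n)^{-1})$ and $\Omega(K^{1+\epsilon})$ window used to justify the delay approximation \eqref{eq:davgc2}.
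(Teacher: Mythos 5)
Your proposal is correct and follows essentially the same route as the paper's proof: Lemma~\ref{le:1_c} for the contiguous three-block structure, clamping $\hat r_m$ to $a(n)^{-1}$ and $K$ on the outer regimes, the KKT water-filling $\hat r_m\propto\sqrt{p^{pop}_m}$ on $\mathcal{I}^{(c)}_2$ with a binding budget, boundary matching conditions whose ratio cancels $S^{(c)}$ (the paper obtains the same relation $m^{(c)}_1=\Theta\bigl((K^{1+\epsilon}a(n))^{2/\alpha}\,m^{(c)}_2\bigr)$, matching your ratio up to the arbitrarily small $\epsilon$), and the $\alpha=2$ split via the truncated harmonic sum $H_{\alpha/2}(\cdot)$. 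If anything, your plan for resolving the coupled system (identifying the dominant budget block and verifying regime consistency a posteriori) is more explicit than the paper's appendix, which stops at an implicit equation for $m^{(c)}_2$.
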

\begin{proof}
 Deferred to Appendix \ref{AppendixD_c}. 
\end{proof}
From Proposition~\ref{th:oprep_c}, the order-optimal cache allocation strategy is partitioned into three parts, and the content indice $m^{(c)}_1$ and $m^{(c)}_2$  are specified as a function of key parameters $K$, $M$, $a(n)$, and $\alpha$. Similarly as in the uncoded caching scenario, our MDS-coded cache allocation strategy under the total caching constraint in \eqref{eq:opc} can be extended to satisfy the local caching constraints when the replica allocation algorithm in Section~\ref{section:52} is employed in which for each content $m \in \mathcal{M}$, $\left\lceil \hat{r}_m\right\rceil$ MDS-coded subpackets are cached instead of $\left\lceil \hat{X}_m\right\rceil$ replicas. Based on the same argument as those in Remark~\ref{R:1}, the local cache size constraints hold within a factor of 2.

In the next subsection, we characterize the optimized minimum average content transfer delay $D_{avg}(n)$ by adopting the order-optimal cache allocation strategy presented in Proposition~\ref{th:oprep_c} and also analyze the impact of key parameters $K$, $M$, $a(n)$, and $\alpha$ on the order-optimal performance. 
\subsection{Order-Optimal Performance}~\label{section:63}
In this subsection, we compute the minimum average content transfer delay $D_{avg}(n)$ using the order-optimal cache allocation strategy obtained in Proposition~\ref{th:oprep_c}.
\begin{theorem}\label{th:delay_c} Consider a content-centric mobile ad hoc network model with subpacketization adopting the optimal cache allocation strategy $\left\{\hat{r}_{m}\right\}_{m=1}^M$ in~\eqref{eq:oprep_c} and following the MDS-coded random reception strategy. Then, the minimum average content transfer delay $D_{avg}(n)$ is given by
\smallskip
\begin{align}
D_{avg}(\!n\!)\!=\!\!\!\left\{\begin{array}{lll} \!\!\!\!\Theta\!\!\left(K\right) 
&\!\!\!\textrm{\!\!$m^{(c)}_1\!\!=\!\Theta(\!M\!)$} \\
\!\!\!\!\Theta\!\!\left(\!\!\max \!\!\left\{\!K,\frac{\left(H_\frac{\alpha}{2}(M)\right)^2}{H_\alpha(M) na(n)}\right\}\!\!\right) &\!\!\!\textrm{\!\!$m^{(c)}_2\!\!=\!\Theta(\!M\!)$ \upshape} \\
\hspace{4.8cm}\textrm{\upshape and} & \!\!\!\textrm{\!\!$m^{(c)}_1\!\!=\!o(\!M\!)$} \\
\!\!\!\!\Theta\!\left(\!\!\max\!\!\left\{\!\! K,\frac{a(n)^{-1}\left(H_\frac{\alpha}{2}(m^{(c)}_2)\right)^2}{H_\alpha(M) (n-M)},\!\frac{\log K}{a(n)}\!\right\}\!\!\right)
&\!\!\!\!\textrm{\!\!$m^{(c)}_2\!\!=\!o(\!M\!)$},
\end{array}\right. \label{eq:delay_c2} 
\end{align}
\noindent where $K$ is the number subpackets of each content, $a(n)$ is the area in which a node can communicate with other nodes, and $H_\alpha(M)$ and $H_\frac{\alpha}{2}(M)$ are given in \eqref{eq:H}  and \eqref{eq:H2}, respectively.
\end{theorem}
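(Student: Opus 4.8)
The plan is to substitute the order-optimal allocation $\{\hat{r}_m\}_{m=1}^M$ from Proposition~\ref{th:oprep_c} into the simplified objective function~\eqref{eq:opsc}, whose value is order-equivalent to $D_{avg}(n)$ by Lemma~\ref{le:otd} together with the regime-wise delay characterizations in~\eqref{eq:davgc1}, \eqref{eq:davgc2}, and~\eqref{eq:davgc3}. Since the objective is a sum of three pieces over the partitions $\mathcal{I}^{(c)}_1$, $\mathcal{I}^{(c)}_2$, and $\mathcal{I}^{(c)}_3$, I would evaluate each contribution separately and then combine them, reading off the three cases in~\eqref{eq:delay_c2} from the positions of the boundary indices $m^{(c)}_1$ and $m^{(c)}_2$.

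First I would dispose of the two easier pieces. The Regime~I contribution is $\sum_{m \in \mathcal{I}^{(c)}_1} K p^{pop}_m$, which is $\Theta(K)$ times the popularity mass of Regime~I; moreover, since any requesting node needs at least $K$ distinct subpackets and can receive at most one per slot, every per-content delay is $\Omega(K)$, so $D_{avg}(n) = \Omega(K)$ unconditionally, which accounts for the $K$ appearing inside every $\max$ in~\eqref{eq:delay_c2}. The Regime~III contribution is $\frac{\log K}{a(n)}\sum_{m \in \mathcal{I}^{(c)}_3} p^{pop}_m$; since the tail popularity mass is $O(1)$, this is $O\!\left(\frac{\log K}{a(n)}\right)$, consistent with the $\frac{\log K}{a(n)}$ term appearing inside the $\max$ of the third case.

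The crux is the Regime~II contribution. Substituting $\hat{r}_m = \frac{\sqrt{p^{pop}_m}}{\sum_{\widetilde m = m^{(c)}_1}^{m^{(c)}_2 - 1}\sqrt{p^{pop}_{\widetilde m}}}\, S^{(c)}$ collapses the sum into the closed form
\begin{align*}
\sum_{m \in \mathcal{I}^{(c)}_2} \frac{p^{pop}_m K}{a(n)\hat{r}_m} = \frac{K}{a(n) S^{(c)}}\left(\sum_{m = m^{(c)}_1}^{m^{(c)}_2 - 1}\sqrt{p^{pop}_m}\right)^{2},
\end{align*}
where the inner sum, using $\sqrt{p^{pop}_m} = m^{-\alpha/2}/\sqrt{H_\alpha(M)}$, is a partial sum of $m^{-\alpha/2}$ that I would estimate by $H_{\frac{\alpha}{2}}(M)$ when $m^{(c)}_2 = \Theta(M)$ and by $H_{\frac{\alpha}{2}}(m^{(c)}_2)$ when $m^{(c)}_2 = o(M)$, via the integral bounds underlying~\eqref{eq:H2}. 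This yields the terms $\frac{(H_{\frac{\alpha}{2}}(M))^2}{H_\alpha(M)\,na(n)}$ and $\frac{a(n)^{-1}(H_{\frac{\alpha}{2}}(m^{(c)}_2))^2}{H_\alpha(M)\,(n-M)}$ of the second and third cases, respectively.

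The step I expect to be the main obstacle is showing $S^{(c)} = \Theta(Sn) = \Theta(Kn)$, because $S^{(c)} = Sn - (m^{(c)}_1 - 1)a(n)^{-1} - (M - m^{(c)}_2 + 1)K$ and the subtracted Regime~I storage $(m^{(c)}_1 - 1)a(n)^{-1}$ is itself of order $Sn$ for $\alpha \le 2$, so I cannot simply discard it. To control it I would use the boundary matching implied by the continuity of $\hat{r}_m$, namely $\hat{r}_m = \Theta(a(n)^{-1})$ at $m = m^{(c)}_1$ and $\hat{r}_m = \Theta(K)$ at $m = m^{(c)}_2$, together with the explicit expressions for $m^{(c)}_1$ and $m^{(c)}_2$ in~\eqref{eq:am1_c} and~\eqref{eq:am2_c}, to verify that the Regime~I storage is a \emph{strict} constant fraction of $Sn$ while the Regime~III storage $(M - m^{(c)}_2 + 1)K = O(MK) = o(Sn)$ is negligible (using $M = \Theta(n^\beta)$ with $\beta < 1$). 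Once $S^{(c)} = \Theta(Kn)$ is established, I would finish with a direct case split: when $m^{(c)}_1 = \Theta(M)$ essentially all mass lies in Regime~I and only $\Theta(K)$ survives; when $m^{(c)}_2 = \Theta(M)$ with $m^{(c)}_1 = o(M)$ Regime~III is empty in order and the delay is the maximum of the Regime~I and Regime~II terms, recovering the uncoded expression of Theorem~\ref{th:delay_s}; and when $m^{(c)}_2 = o(M)$ all three terms contribute, yielding~\eqref{eq:delay_c2}.
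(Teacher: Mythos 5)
Your proposal is correct and follows essentially the same route as the paper's proof: substitute the allocation of Proposition~\ref{th:oprep_c} into the regime-decomposed objective, collapse the Regime~II sum into $\frac{K}{a(n)S^{(c)}}\bigl(\sum_{m=m^{(c)}_1}^{m^{(c)}_2-1}\sqrt{p^{pop}_m}\bigr)^2$, estimate the partial Zipf sums via $H_{\frac{\alpha}{2}}(\cdot)$, and read off the three cases from whether $m^{(c)}_1$ or $m^{(c)}_2$ is $\Theta(M)$ or $o(M)$. The only difference is that you explicitly flag and plan to justify $S^{(c)}=\Theta(Kn)$ (equivalently $\Theta(K(n-M))$ since $M=\Theta(n^\beta)$, $\beta<1$), a step the paper's Appendix~\ref{AppendixE_c} uses implicitly without comment, so your version is if anything slightly more careful.
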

\smallskip
\begin{proof}
 Deferred to Appendix \ref{AppendixE_c}.
\end{proof}
From Theorems~\ref{th:TD} and~\ref{th:delay_c}, the maximum achievable per-node throughput $\lambda(n)$ is given by
\begin{equation}  \label{eq:pnt_c}
\lambda(n)\!\!=\!\!\left\{\begin{array}{lll} \!\!\!\!\Theta\!\!\left(\frac{1}{na(n)K}\!\!\right)
&\!\!\!\!\textrm{$m^{(c)}_1\!\!=\!\Theta(\!M\!)$} \\
\!\!\!\!\Theta\left(\!\min \left\{\frac{1}{na(n)K},\frac{H_\alpha(M)}{\left(H_\frac{\alpha}{2}(M)\right)^2}\right\}\!\!\right) &\!\!\!\!\textrm{$m^{(c)}_2\!\!=\!\Theta(\!M\!)$} \\
\hspace{5cm}\textrm{\upshape and} &\!\!\!\!\textrm{$m^{(c)}_1\!\!=\!o(\!M\!)$} \\
\!\!\!\! \Theta\!\!\left(\!\!\min\left\{\!\!\frac{1}{na(n)K},\!\frac{H_\alpha(M) (n-M)}{n\left(H_\frac{\alpha}{2}(m^{(c)}_2)\right)^2},\!\frac{1}{n\log K}\right\}\!\!\right)\!\!
&\!\!\!\!\textrm{$m^{(c)}_2\!\!=\!o(\!M\!)$}.
\end{array}\right.
\end{equation}
Similarly as in the uncoded caching case, the average content transfer delay $D_{avg}(n)$ and the per-node throughput $\lambda(n)$ for MDS-coded caching scale with respect to $a(n)$, $K$, $M$ and $\alpha$. To validate the analytical results obtained in Sections~\ref{section:5} and~\ref{section:6}, we perform intensive numerical evaluation in the next section. 

\section{Numerical Evaluation and Performance Comparison}~\label{section:7} 
In this section, we perform intensive computer simulations with finite system parameters $a(n)$, $K$, $M$, and $\alpha$ to obtain the numerical solutions to the optimization problems in \eqref{eq:op_s} and \eqref{eq:opsc}. We compare the numerical evaluations with the analytical results presented in Sections~\ref{section:5} and~\ref{section:6} to validate our analysis. We first validate the order-optimal caching allocation strategies presented in \eqref{eq:oprep_s} and \eqref{eq:oprep_c} and highlight the impact of system parameters according to the operating regimes. Then, we compare the order-optimal performance  on the average content transfer delay $D_{avg}(n)$ for uncoded and MDS-coded caching scenarios.

\subsection{Order-Optimal Cache Allocation Strategy}~\label{section:71}
Figure~\ref{fig:caching_uncoded} is an illustration of the optimal caching strategy for the uncoded caching case employing sequential reception. We can observe the consistency between the analytical results in Fig.~\ref{fig:Aseq_rep} obtained using Proposition \ref{th:oprep_s} and the results obtained by numerically solving the problem in~\eqref{eq:op_s} in Fig.~\ref{fig:Nseq_rep} for $M=250$, $K=20$, and $n=30000$. We can see how the optimal number of replicas $\hat{X}_m$ behaves according to different values of the area $a(n)$ and the Zipf exponent $\alpha$ (i.e., values corresponding to their respective operating regime) as depicted in Fig.~\ref{fig:caching_uncoded}.
When $\alpha=0.5$, the boundary between Regimes I\(^{(u)}\) and II\(^{(u)}\) is given by $m^{(u)}_1=\Theta\left(\frac{(na(n))^4}{M^3}\right)$. In this case, if $a(n)= \Theta\left(\log n/n\right)$, then the optimal number of replicas $\hat{X}_m$ is monotonically decreasing with a slope of $\alpha/2$, i.e., the caching strategy operates in Regime II\(^{(u)}\). When we increase $a(n)$ $($e.g., $a(n)= \Theta\left(M^{0.8}/n\right))$, the caching strategy operates in both Regimes I\(^{(u)}\) and II\(^{(u)}\). On the other hand, when $\alpha=2$, the boundary between two regimes is given by $m^{(u)}_1=\Theta\left(\frac{na(n)}{\log M}\right)$. In this case, the range of Regime I\(^{(u)}\) tends to be wider than the case of $\alpha=0.5$, as shown in Fig~\ref{fig:caching_uncoded}. 
\begin{figure}[t]
\centering
\subfigure[Analytical  results]{\includegraphics[width=0.48\linewidth]{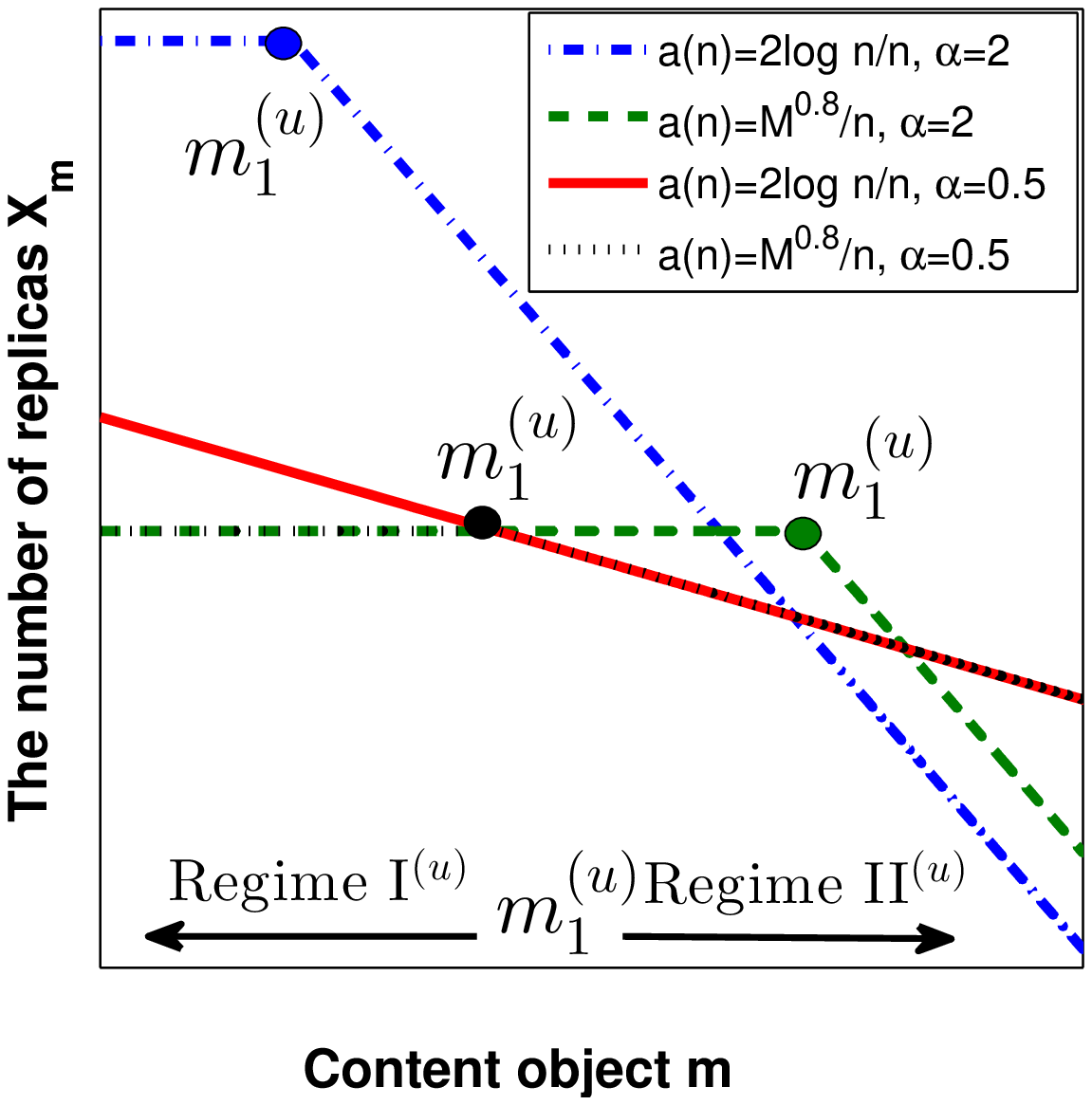} \label{fig:Aseq_rep}}
\subfigure[Numerical  results]{\includegraphics[width=0.48\linewidth]{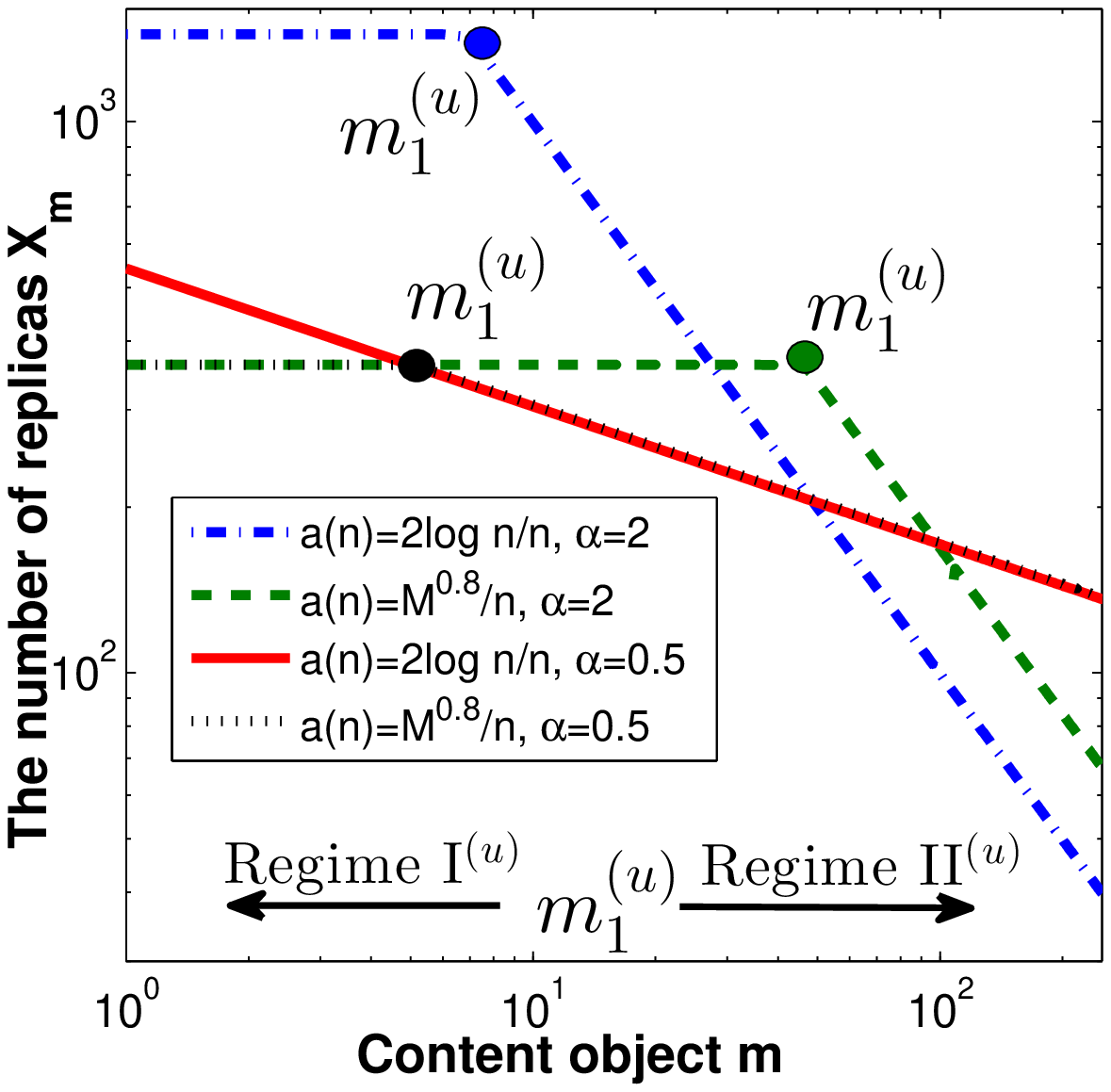} \label{fig:Nseq_rep}}
\caption{ Optimal cache allocation strategy  versus content object $m$ for the uncoded caching case employing sequential reception.}
\label{fig:caching_uncoded}
\end{figure}

\begin{figure}[t]
\centering
\subfigure[Analytical results]{\includegraphics[width=0.48\linewidth]{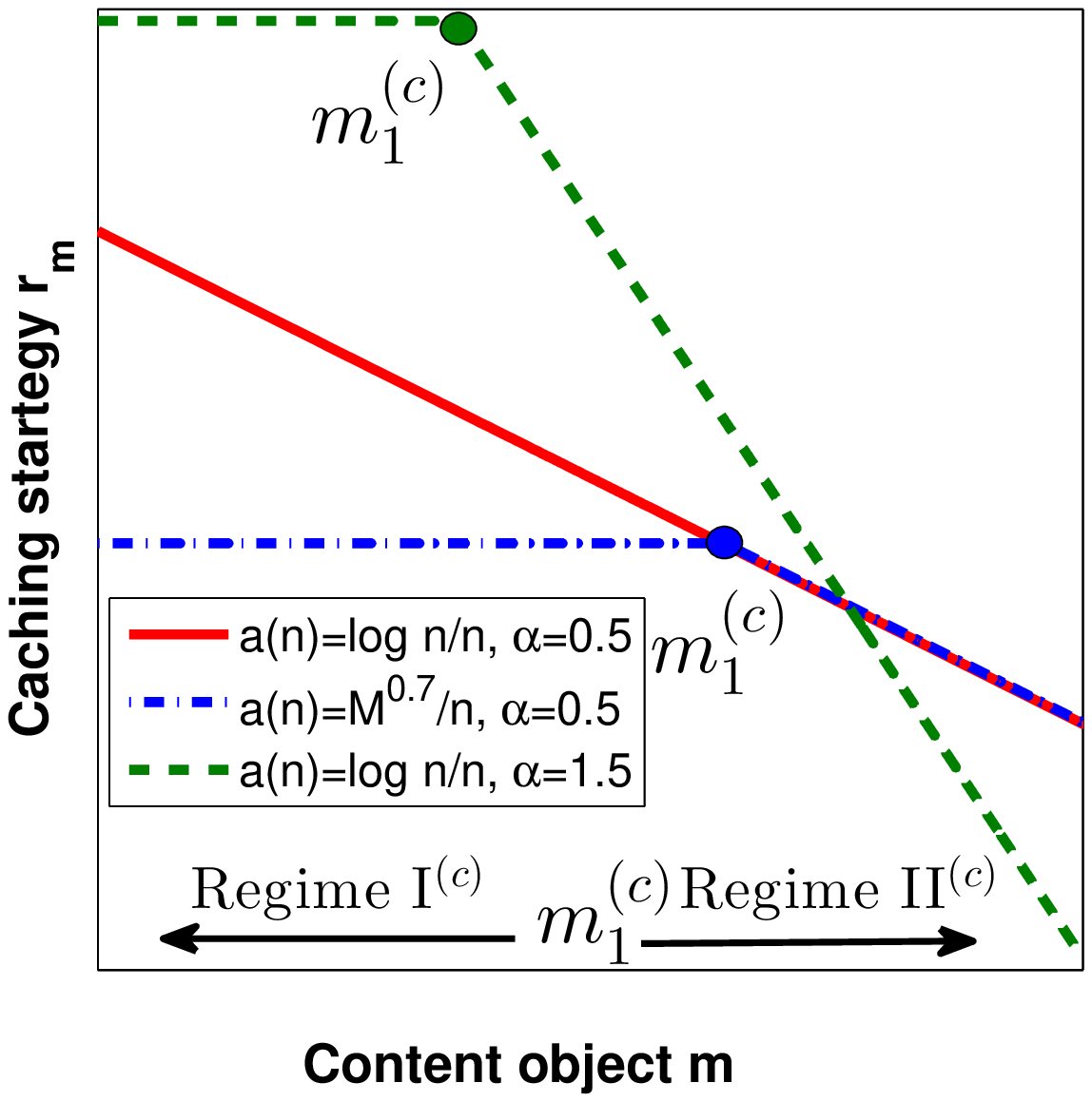} \label{fig:Aran_rep}}
\subfigure[Numerical results]{\includegraphics[width=0.48\linewidth]{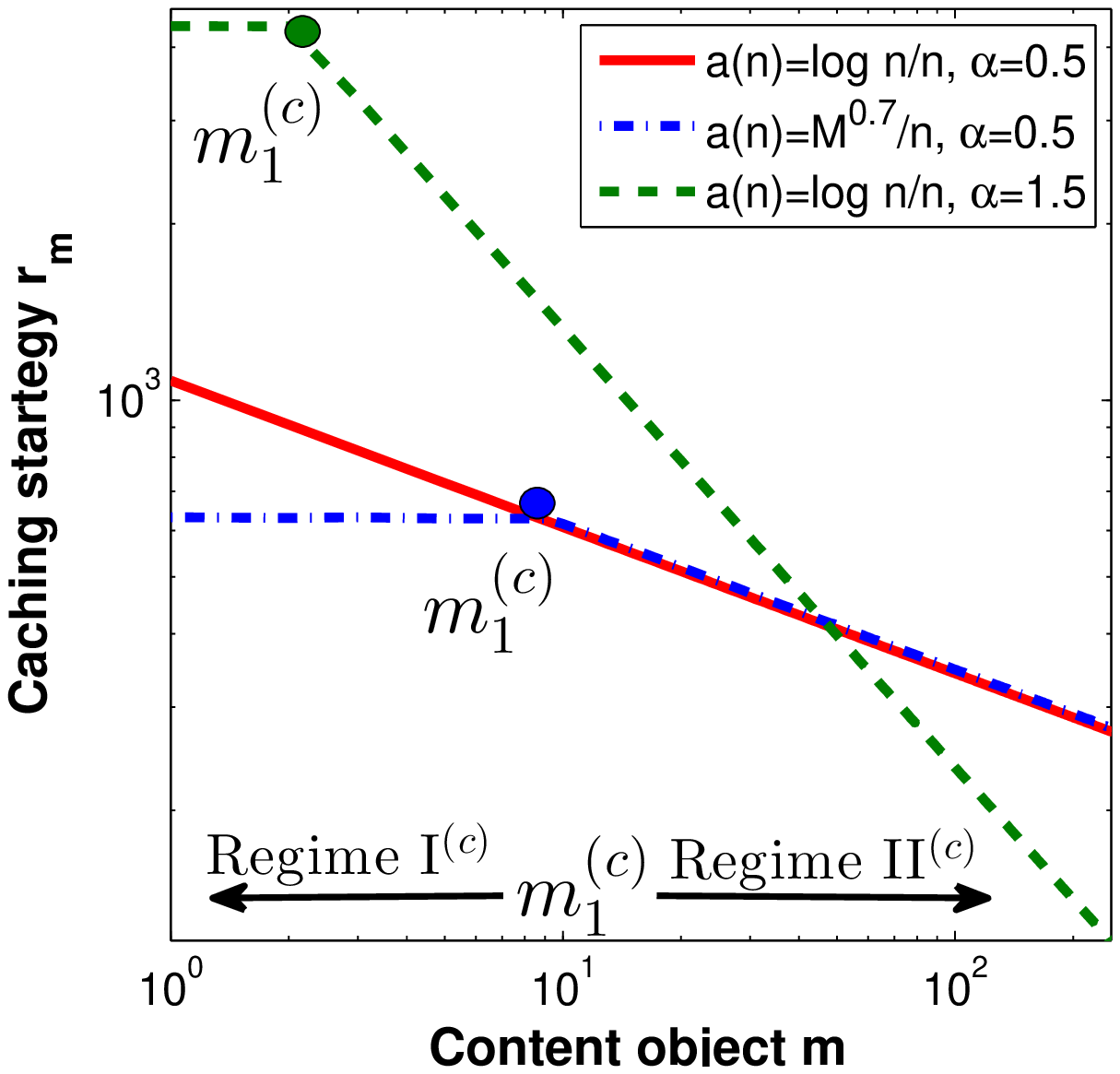} \label{fig:Nran_rep}}
\caption{ Optimal cache allocation strategy versus content object $m$ for the MDS-coded caching case employing random reception.}
\label{fig:caching_coded}
\end{figure}

In Fig.~\ref{fig:caching_coded}, the optimal caching strategy for the MDS-coded caching case employing random reception is illustrated, where the analytical results are depicted in Fig.~\ref{fig:Aran_rep} obtained by Proposition \ref{th:oprep_c}. The results obtained by numerically solving the problem in~\eqref{eq:opsc} are also shown in Fig.~\ref{fig:Nran_rep} for $M=250$, $K=3$, and $n=30000$. Similarly as in the uncoded caching case, we can see how the optimal number of MDS-coded subpackets $\hat{r}_m$ behaves according to different values of the area $a(n)$ and the Zipf exponent $\alpha$ (i.e, values corresponding to their respective operating regime) as depicted in Fig.~\ref{fig:caching_coded}.
Form Propositions \ref{th:oprep_s} and \ref{th:oprep_c}, an important observation is that for given system parameters, the range of Regime I\(^{(c)}\) (the MDS-coded caching case) tends to scale $K$ times wider than that of Regime I\(^{(u)}\) (the uncoded caching case).
\subsection{Order-Optimal Performance}~\label{section:72}
In Fig.~\ref{fig:del_area}, we illustrate how the optimal average content transfer delay $D_{avg}(n)$ behaves according to different values of the area $a(n)$ and the Zipf exponent $\alpha$. We can observe the consistency between the analytical results in Fig.~\ref{fig:Adel_area} obtained using Theorems~\ref{th:delay_s} and~\ref{th:delay_c} and the results obtained by numerically solving the problems in~\eqref{eq:op_s} and~\eqref{eq:opsc} in Fig.~\ref{fig:Ndel_area}, respectively, for $M=250$, $K=20$, and $n=30,000$. When $\alpha=3$, the average content transfer delay of $D_{avg}(n)= \Theta(K)$ is achieved for both the uncoded and MDS-coded caching cases, which is the minimum that we can hope for as far as $a(n)=\Omega(\log n/n)$. The performance difference between the uncoded and the MDS-coded caching scenarios becomes prominent when $\alpha < 2$ as shown in the figure. From the fact that for $\alpha=1.5$, the average delay $D_{avg}(n)$ is given by $\Theta\left(\max\left\{K, \frac{KM^{0.5}}{na(n)}\right\}\right)$ and $\Theta\left(\max\left\{K,\frac{M^{0.5}}{na(n)}\right\}\right)$ for the uncoded and the MDS-coded caching cases, respectively. Moreover, we have $D_{avg}(n) =\Theta(K)$ when $a(n)$ scales as $\Omega\left(\sqrt{M}/n\right)$ and as $\Omega\left(\sqrt{M}/nK\right)$ for the uncoded and MDS-coded caching cases, respectively. Similarly, for $\alpha=0.5$, it follows that $D_{avg}(n) =\Theta(K)$ when $a(n)$ scales as $\Omega(M/n)$ and as $\Omega(M/nK)$ for the uncoded and MDS-coded caching cases, respectively. For $\alpha<2$, based on the above arguments and from Theorem~\ref{th:TD}, the per-node throughput $\lambda(n)$ for the MDS-coded caching case scales $K$ times larger than the uncoded caching case while attaining the order-optimal delay $D_{avg}(n) =\Theta(K)$.
\begin{figure}[t]
\centering
\subfigure[Analytical results]{\includegraphics[width=0.48\linewidth]{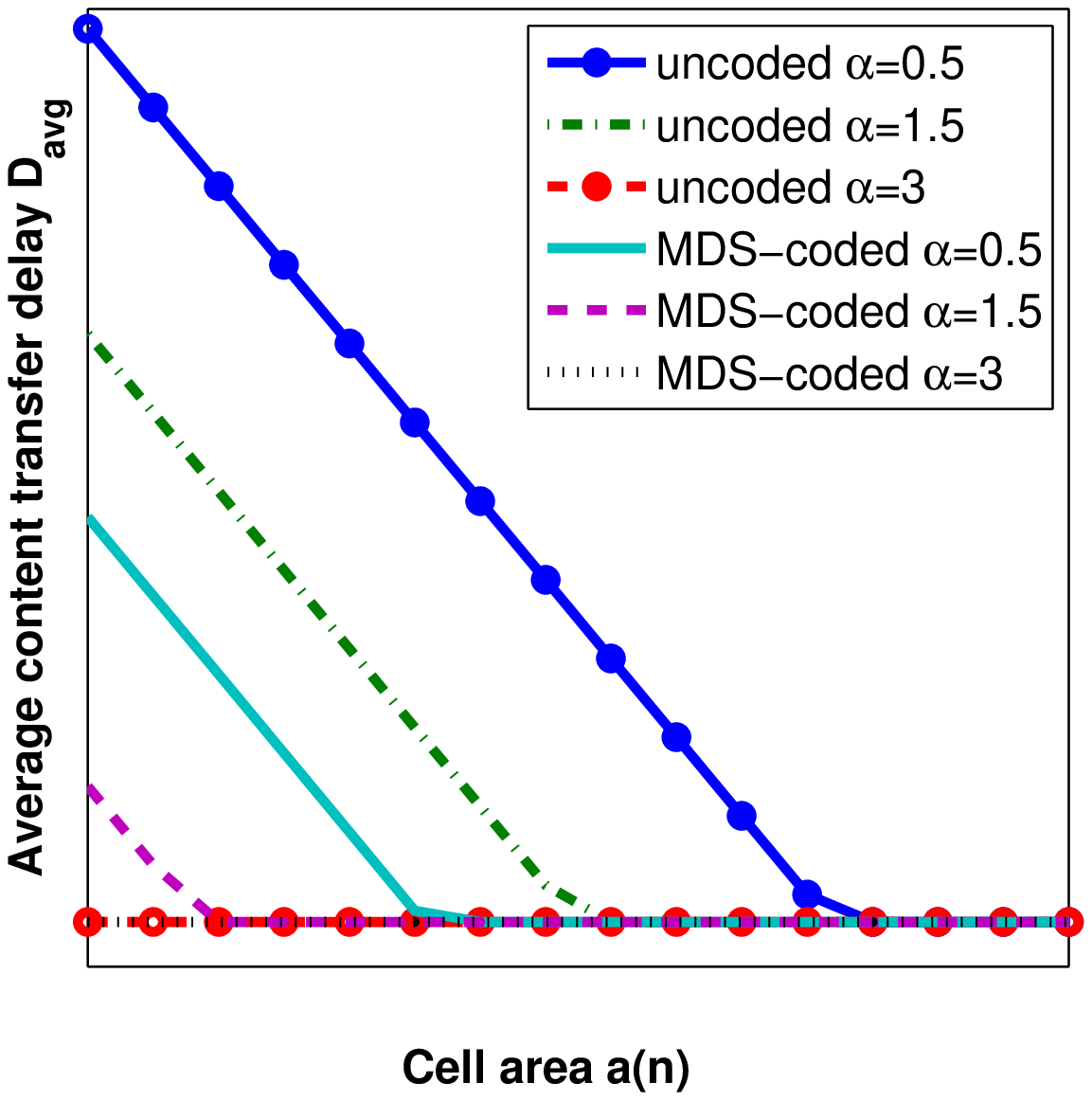} \label{fig:Adel_area}}
\subfigure[Numerical results]{\includegraphics[width=0.48\linewidth]{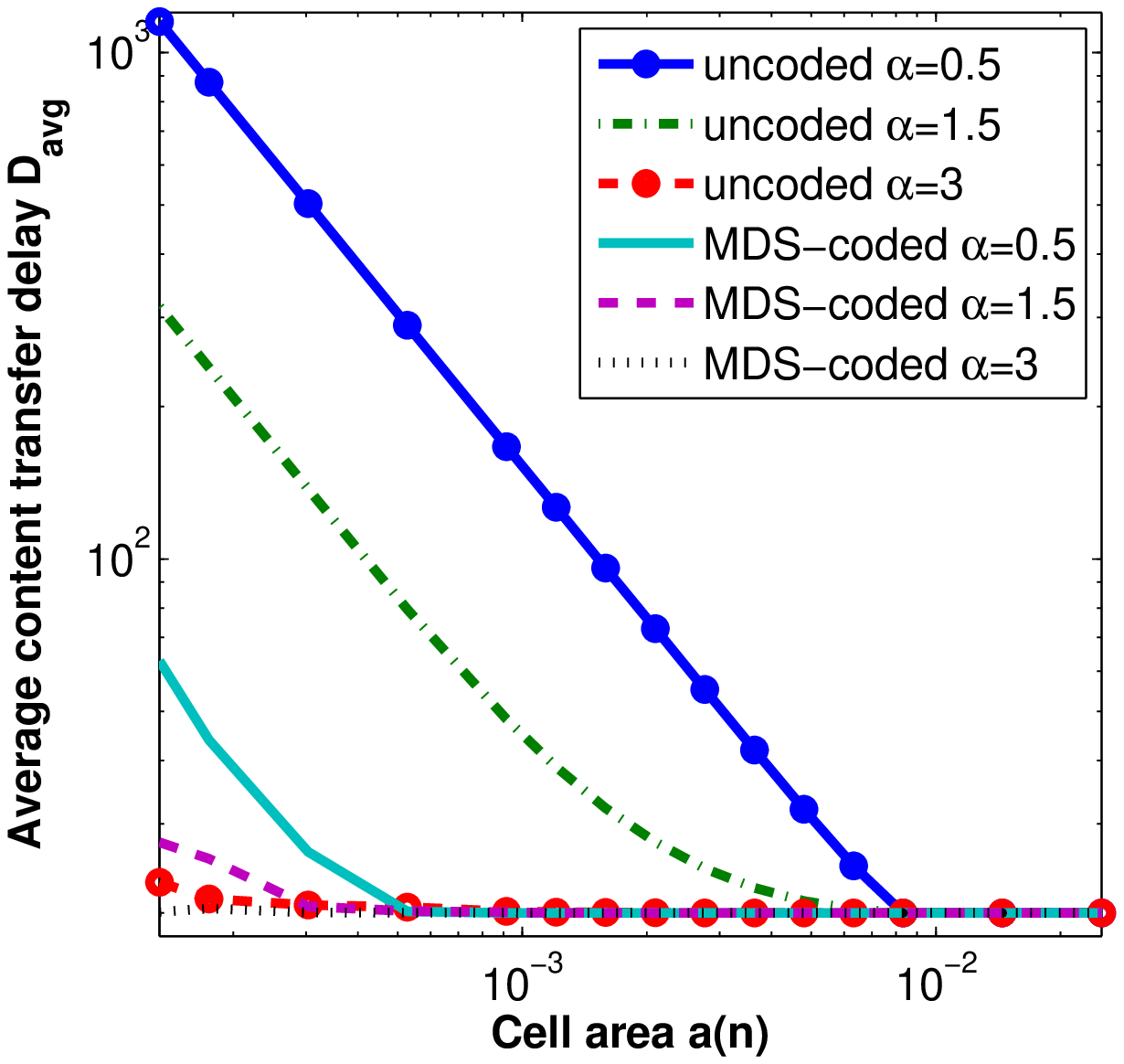} \label{fig:Ndel_area}}
\caption{The average content transfer delay $D_{avg}(n)$ versus cell area $a(n)$.}
\label{fig:del_area}
\end{figure}
\begin{figure}[t]
\centering
\subfigure[Analytical results]{\includegraphics[width=0.48\linewidth]{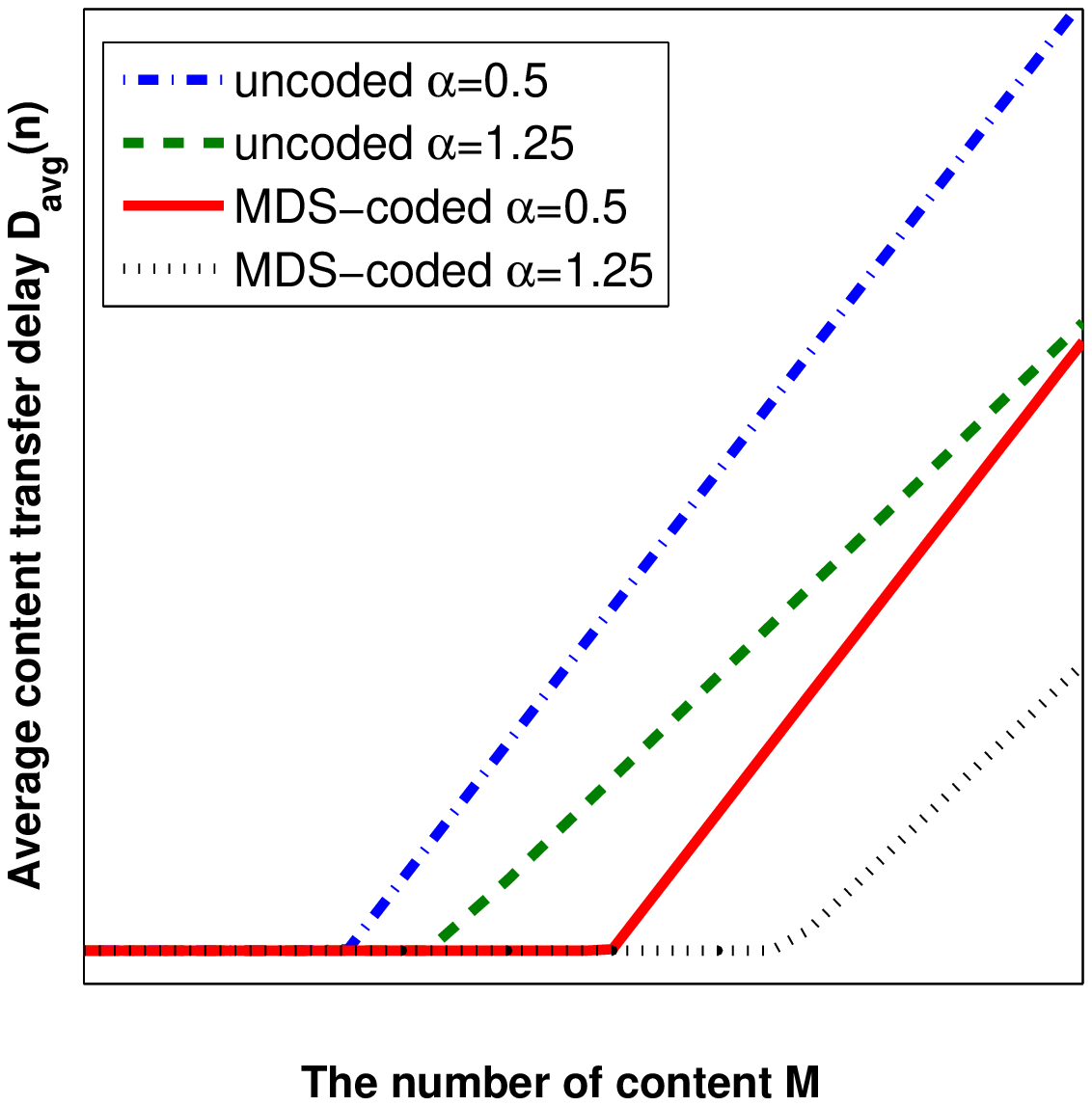} \label{fig:Adel_content}}
\subfigure[Numerical results]{\includegraphics[width=0.48\linewidth]{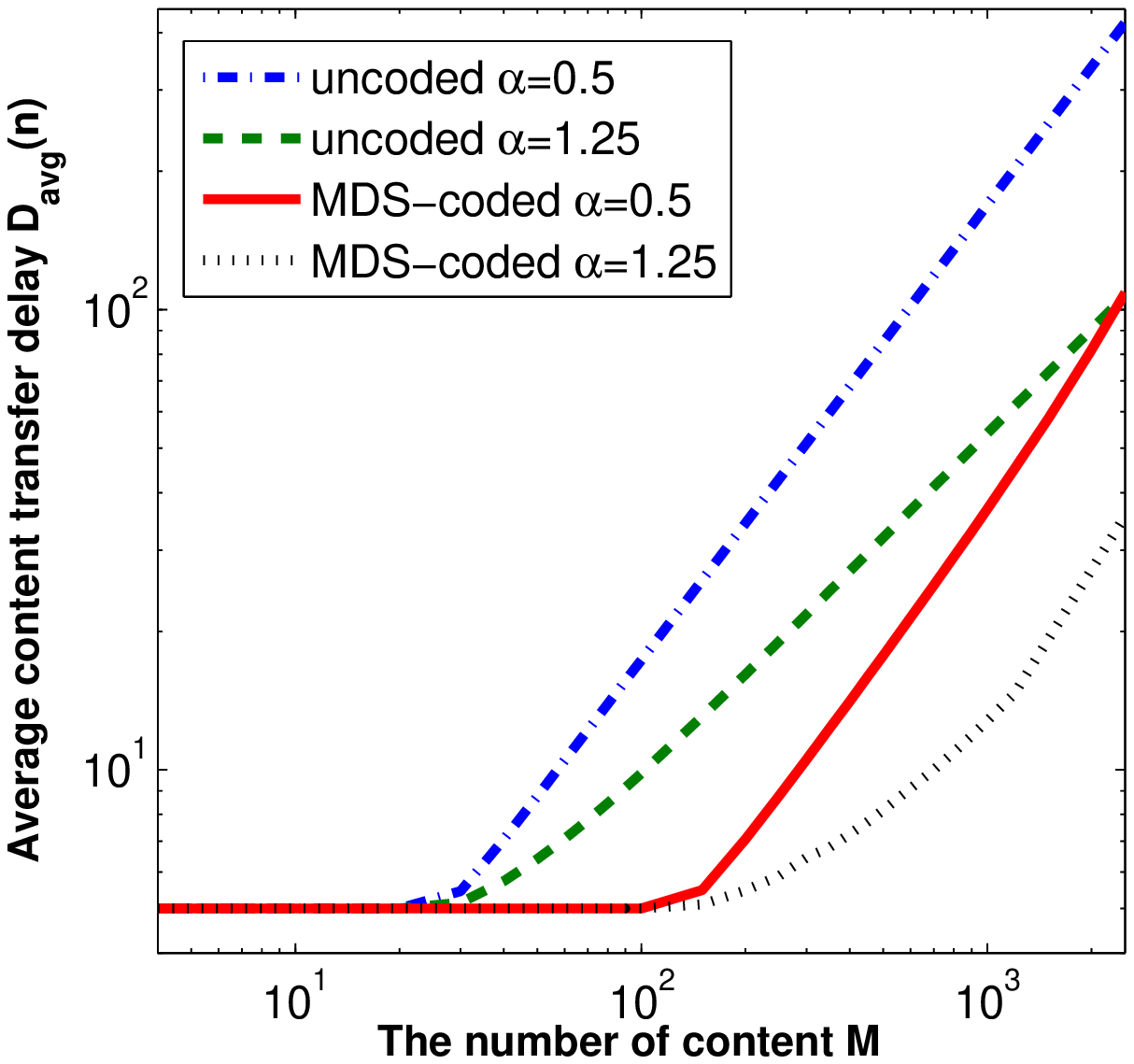} \label{fig:Ndel_content}}
\caption{The average content transfer delay $D_{avg}(n)$ versus the number of content $M$.}
\label{fig:del_content}
\end{figure}

Figure~\ref{fig:del_content} is an illustration of how the optimal average content transfer delay $D_{avg}(n)$ behaves according to different values of the number of content $M$ and the Zipf exponent $\alpha$. Figure~\ref{fig:Adel_content} illustrates the analytical results obtained from Theorems~\ref{th:delay_s} and~\ref{th:delay_c}, and Fig.~\ref{fig:Ndel_content} illustrates the results obtained by numerically solving the problems in~\eqref{eq:op_s} and~\eqref{eq:opsc} for $K=5$, $a(n)= n/\log n$, and $n=30,000$. The performance difference between the uncoded and the MDS-coded caching cases also becomes prominent when $\alpha < 2$. For $\alpha=1.25$, the average delay $D_{avg}(n)$ is given by $\Theta\left(\max\left\{K, \frac{KM^{0.75}}{\log n}\right\}\right)$ and $\Theta\left(\max\left\{K,\frac{M^{0.75}}{\log n}\right\}\right)$ for the uncoded and the MDS-coded caching cases, respectively. Moreover, we have $D_{avg}(n) =\Theta(K)$ when $M$ scales as $O\left(\left(\log n\right)^\frac{4}{3}\right)$ and as $O\left(\left(K \log n\right)^\frac{4}{3}\right)$ for the uncoded and the MDS-coded caching cases, respectively. Similarly, for $\alpha=0.5$, it follows that $D_{avg}(n) =\Theta(K)$ when $M$ scales as $O(\log n)$ and as $O(K \log n)$ for the uncoded and the MDS-coded caching cases, respectively.

\section{Extension to the Random Walk Mobility Model}~\label{section:8}
In this section, we extend our study to another scenario where each node moves independently according to the random walk mobility model studied in~\cite{alfano, rw, algamal}. In the mobility model, the position $d(t)$ of a node at time slot $t$ is updated by $d(t)= d(t-1)+ y_t$, where $y_t$ is a sequence of independent and identically distributed (i.i.d.) random variables that represent a node's flight vector with an average flight length $L= \mathbb{E}\left[\left\|y_t\right\|\right]$. Here, $L$ is assumed to scale as the transmission range $R$ (i.e., $R=\Theta(L)$) as in~\cite{rw, algamal}\footnote{Note that when the transmission range $R$ scales slower than the flight length $L$ (i.e., $R=o(L)$), one can achieve the same results as those for $R=\Theta(L)$ based on~\cite[Lemma 7]{alfano}. The single-hop scenario is known to be appropriate for $R=o(L)$, where higher throughput can be achieved compared to the case using multihop relaying protocols~\cite[Section IV]{alfano}.}.  Likewise, we adopt the single-hop-based content delivery that does not employ any relaying strategies. Thus, a requesting node can successfully retrieve its desired content only if a potential source node is within the transmission range $R= \Theta(L)$ in a given time slot. Otherwise, it moves until it finds an available source node for the request. The content delivery protocol and reception strategies essentially follow the same line as those in Section~\ref{section:3}. We state the following lemma introduced in~\cite{alfano} in terms of our notations for completeness.
\begin{lemma}[\!\protect{\cite[Lemma 7]{alfano}}]\label{le:tavg} Consider two arbitrary nodes that are uniformly distributed over a region of unit area at time $t=0$ and assume that each node moves independently according to the random walk model with average flight length $L$. Then, the average first hitting time $T_{avg}$ required such that the distance between the nodes is less than or equal to $R=\Theta(L)$ is given by
\begin{align} \label{eq:Tavg}
T_{avg} = O\left(\frac{\log n}{R^2} \right) \textrm{\upshape and }\Omega\left(\frac{1}{R^2}\right).   
\end{align} 
\end{lemma}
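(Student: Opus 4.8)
The plan is to reduce the two-body meeting problem to a single random walk hitting a small target, and then to invoke hitting-time estimates for two-dimensional random walks. First I would pass to the relative displacement $\Delta(t) = d_1(t) - d_2(t)$ of the two nodes. Since both nodes move independently with i.i.d. flight vectors, $\Delta(t)$ is itself a random walk whose one-step increment is the difference of two independent flight vectors; this increment is symmetric, mean-zero, and has average length $\Theta(L) = \Theta(R)$. The meeting event ($d_{12}(t) \le R$) is equivalent to $\Delta(t)$ entering the ball $B_R$ of radius $R$ centered at the origin on the unit-area (torus) region, and because the two nodes start uniformly at $t=0$, so does $\Delta(0)$. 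Thus $T_{avg}$ is exactly the expected first hitting time of $B_R$ by $\Delta(t)$ from a uniformly random start. I would then discretize: partition the unit region into $N = \Theta(1/R^2) = \Theta(1/a(n))$ square cells of side $\Theta(R)$. Since each step of $\Delta(t)$ has length $\Theta(R)$, the cell-indexed process is, up to a bounded-range smoothing, a symmetric short-range random walk on the two-dimensional discrete torus with $N$ sites whose stationary law is uniform, and the target $B_R$ covers $\Theta(1)$ cells.

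For the \emph{upper bound}, I would invoke the classical estimate that on the two-dimensional discrete torus with $N$ sites the expected hitting time of a fixed target from any starting point is $O(N \log N)$. This gives $T_{avg} = O(N \log N) = O\!\left(\frac{1}{R^2}\log \frac{1}{R^2}\right)$. Since $R = \Omega(\sqrt{\log n / n})$ and $R = O(1)$, we have $\log(1/R^2) = \Theta(\log n)$, which yields $T_{avg} = O(\log n / R^2)$, matching the claimed bound in Lemma~\ref{le:tavg}.

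For the \emph{lower bound}, I would use the uniform stationary law: the probability that $\Delta(t) \in B_R$ equals the area fraction $\Theta(R^2)$. Hence the expected number of visits to $B_R$ within any $T$ steps is $\Theta(T R^2)$, so in order for a first visit to have occurred one needs $T R^2 = \Omega(1)$, i.e.\ $T_{avg} = \Omega(1/R^2)$. This elementary stationary counting argument suffices and requires no fine control of the transition kernel.

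The main obstacle will be the upper bound. The two-dimensional walk sits on the borderline between recurrence and transience, which is precisely why the hitting time carries the extra logarithmic factor, and establishing the $O(N \log N)$ order requires local central limit theorem / Green's-function (heat-kernel) estimates rather than the simple stationary argument used for the lower bound. Additional care is needed at the discretization step: because the flight vector follows a general, possibly non-lattice, distribution, one must verify that the induced cell walk mixes like the standard torus walk and that the short-range (rather than exactly nearest-neighbor) nature of the increments does not change the order of the hitting time. Alternatively, one may sidestep the custom analysis entirely by directly invoking the original result in \cite{alfano}, which is how the lemma is cited here for completeness.
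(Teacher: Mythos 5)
Your proposal takes a genuinely different route from the paper, for a simple reason: the paper does not prove this lemma at all. It is imported verbatim from \cite[Lemma 7]{alfano} and stated ``in terms of our notations for completeness''; the citation is the paper's entire justification. What you sketch is the standard self-contained argument underlying such results: reduce the two-body meeting problem to the relative-displacement walk $\Delta(t)$ (valid on a torus, where the difference of two independent uniform positions is uniform and the difference of independent i.i.d.\ flights is again an i.i.d.-increment walk), discretize into $N=\Theta(1/R^2)$ cells, invoke the classical $\Theta(N\log N)$ hitting-time estimate for symmetric walks on the two-dimensional discrete torus for the upper bound, and use the stationary first-moment bound $\Pr(\tau\le T)=O(TR^2)$ for the $\Omega(1/R^2)$ lower bound. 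Both halves are sound in outline, and you correctly identify where the real work lies: the $O(N\log N)$ estimate is classical for nearest-neighbor lattice walks, so extending it to your bounded-range, non-lattice cell process requires local-CLT/Green's-function control that you flag but do not carry out, and the unit-area region must be treated as a torus (or boundary reflections handled separately) for the difference-walk reduction to be exact. One small imprecision: your claim $\log(1/R^2)=\Theta(\log n)$ fails when $R=\Theta(1)$, where it is $\Theta(1)$; only the one-sided bound $\log(1/R^2)=O(\log n)$ holds in general, but that is all the claimed $O(\log n/R^2)$ upper bound needs, so the conclusion survives. In short, the paper's approach buys brevity by deferring rigor to \cite{alfano}, while yours buys self-containment at the cost of nontrivial random-walk machinery; a complete write-up should either fully execute the discretization step or simply cite \cite{alfano} as the paper does.
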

Note that both upper and lower bounds on the average first hitting time $T_{avg}$ are of the same order within a factor of $\log n$. From Lemma \ref{le:tavg}, we establish the following lemma, which formulates the average content transfer delay $D_{avg}(n)$ for both the uncoded sequential reception in Section~\ref{section:321} and the MDS-coded random reception in Section~\ref{section:322} under the random walk mobility model.
\begin{lemma}\label{le:otdRW} 
Consider a content-centric mobile network in which each node moves according to the random walk mobility model with average flight length $L$ and retrieves its requests according to the content delivery protocol in Section~\ref{section:31}. The average content transfer delay $D_{avg}(n)$ for the uncoded caching case employing the sequential reception strategy in Section~\ref{section:321} is given by
\begin{align} \label{eq:davg_SRW} 
D_{avg}(n) &\hspace{0.13cm}= O\left(\sum\limits_{m=1}^{M} \frac{Kp^{pop}_m}{\min\left(1, \frac{a(n)X_{m}}{\log n}\right)} \right) \nonumber
\\ & \textrm{\upshape and } \Omega\left(\sum\limits_{m=1}^{M} \frac{Kp^{pop}_m}{\min\left(1, a(n)X_{m}\right)}\right)
\end{align}
\noindent and $D_{avg}(n)$ for the MDS-coded caching case employing the random reception strategy in Section~\ref{section:322} is given by
\begin{align} \label{eq:davg_CRW} 
D_{avg}(n) &\hspace{0.13cm}=  O\left(\sum\limits_{m=1}^M \sum\limits_{j=0}^{K-1} \frac{p^{pop}_m}{\min\left( 1, \frac{(r_m-j)a(n)}{\log n}\right)} \right) \nonumber \\ &  \textrm{\upshape and } \Omega\left(\sum\limits_{m=1}^M  \sum\limits_{j=0}^{K-1} \frac{p^{pop}_m}{\min\left( 1, (r_m-j)a(n)\right)}\right).  
\end{align}
\end{lemma}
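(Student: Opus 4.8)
The plan is to reduce the computation of $D_{avg}(n)$ to a single model-dependent quantity---the expected first hitting time to the nearest of a given number of relevant source nodes---and then to bound this quantity from above and below using Lemma~\ref{le:tavg}. Exactly as in the decomposition underlying Lemma~\ref{le:otd}, the transfer delay of content $m$ is a sum of waiting times, one per subpacket still needed: for the uncoded sequential case each of the $K$ subpackets requires a fresh contact with one of the $X_{m}$ nodes caching it, while for the MDS-coded random case the $(j+1)$th useful reception requires a contact with one of the $r_m-j$ nodes caching a not-yet-received coded subpacket. Writing $T_{(N)}$ for the expected number of slots until a requesting node first comes within distance $R$ of the nearest of $N$ independently moving source nodes, linearity of expectation together with the Zipf weighting reduces both claims to showing
\begin{align*}
T_{(N)} = O\!\left(\frac{1}{\min\!\left(1, Na(n)/\log n\right)}\right) \textrm{ and } \Omega\!\left(\frac{1}{\min\!\left(1, Na(n)\right)}\right),
\end{align*}
after which I would substitute $N=X_{m}$ and $N=r_m-j$ and sum.

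For the lower bound (the $\Omega$ parts, which coincide with the reshuffling formula of Lemma~\ref{le:otd}) I would use only the stationary marginal of the random walk, which is uniform over the unit area. Hence in any single slot each of the $N$ source nodes lies within distance $R$ of the requesting node with probability $\Theta(a(n))$, so by the union bound the per-slot contact probability is $O(Na(n))$. A first-moment (Markov) argument on the number of contact slots in a window of length $T$ then forces $T_{(N)}=\Omega\!\left(1/\min(1,Na(n))\right)$, the extra ``$1$'' accounting for the trivial fact that at least one slot is always needed. This step is essentially model-independent and mirrors the derivation of \eqref{eq:pmi} and \eqref{eq:pmjc}.

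The upper bound is the crux and is where Lemma~\ref{le:tavg} enters. For $N=1$, Lemma~\ref{le:tavg} already gives $T_{(1)}=O(\log n/R^2)=O(\log n/a(n))$. I would first upgrade this mean bound to a geometric tail: partitioning time into windows of length $\tau=\Theta(\log n/a(n))$ and using that the walk re-mixes over each such window (its diffusive mixing time being $\Theta(1/R^2)=O(\tau)$), Markov's inequality applied window-by-window yields $P(T_{(1)}>k\tau)\le 2^{-\Omega(k)}$, i.e. $P(T_{(1)}>t)\le 2^{-\Omega(t/\tau)}$. Conditioning on the requesting node's trajectory---under which the $N$ source-node hitting times are independent---the survival function factorizes, so $P(T_{(N)}>t)\le 2^{-\Omega(Nt/\tau)}$, and summing over $t$ gives $T_{(N)}=O(\tau/N)=O(\log n/(Na(n)))$, capped below by one slot. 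This produces the claimed $O\!\left(1/\min(1,Na(n)/\log n)\right)$ bound, and the popularity-weighted summation over $m$ and over the $K$ missing subpackets completes the proof.

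I expect the main obstacle to be precisely this superposition step, for two reasons: the $N$ hitting times are \emph{not} marginally independent---they share the common random trajectory of the requesting node---so the factorization must be performed after conditioning on that trajectory, the per-source geometric tail then being supplied by the mixing of the source node itself; and the geometric-tail upgrade from the mean bound in Lemma~\ref{le:tavg} relies on the window-length estimate $\tau=\Theta(\log n/a(n))$ exceeding the walk's mixing time. Both are order-sense statements and cost at most the $\log n$ gap already separating the upper and lower bounds of Lemma~\ref{le:tavg}; the remaining bookkeeping is routine and identical in form to the proof of Lemma~\ref{le:otd}.
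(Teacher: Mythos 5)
Your proposal is correct and follows essentially the same route as the paper: the paper likewise decomposes the delay into per-subpacket waiting times, invokes Lemma~\ref{le:tavg} to assert that the per-slot contact probability with $N$ cached copies is order-equivalent to $\min\left(1, N/T_{avg}\right)$, and then substitutes the upper and lower bounds on $T_{avg}$ before summing with the Zipf weights. The only difference is one of rigor, not of method: the paper states the $1/N$ superposition speedup as an unproved order-equivalence, whereas you supply an explicit justification for it (geometric tail upgrade of the hitting-time bound plus factorization after conditioning on the requester's trajectory), which strengthens rather than departs from the paper's argument.
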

\begin{proof} We first consider the uncoded caching case employing the sequential reception strategy. Let $p^{seq}_{m,i}$ be the contact probability that a node requesting the $i$th subpacket of content $m \in \mathcal{M}$ falls within distance $R$ of a node holding the requested subpacket  in a given time slot. Then, by employing the cache allocation strategy $\left\{X_{m}\right\}_{m=1}^M$ and using Lemma \ref{le:tavg}, the contact probability $p^{seq}_{m,i}$ is order-equivalent to $\min\left( 1, \frac{X_{m}}{T_{avg}} \right)$. Then, the number of time slots required to successfully receive content object $m\in \mathcal{M}$ which consists of $K$ subpackets is given by $\Theta \left(\frac{K}{\min\left( 1, \frac{X_{m}}{T_{avg}} \right)} \right)$. Thus, using~\eqref{eq:Tavg}, the average content transfer delay $D_{avg}(n)$ for the content-centric mobile network employing the uncoded sequential reception strategy and following the random walk mobility model is given by~\eqref{eq:davg_SRW}. \\
Next, we characterize the average content transfer delay $D_{avg}(n)$ for the MDS-coded caching case employing the random reception strategy. Let $p^{ran}_{m,j}$ be the contact probability that a node having pending requests for $K-j$ MDS-coded subpackets of content $m$ falls within distance $R$ of a node in a given time slot holding one of the requested MDS-coded subpackets while the requesting node is assumed to have already received $j$ MDS-coded subpackets. Then, by employing the cache allocation strategy $\left\{r_{m}\right\}_{m=1}^M$ and from Lemma~\ref{le:tavg}, the contact probability $p^{ran}_{m,j}$ is order-equivalent to $\min\left( 1, \frac{r_m-j}{T_{avg}} \right)$. Furthermore, the expected number of time slots required to successfully receives content object $m\in \mathcal{M}$ consisting of $K$ MDS-coded subpackets is given by $\Theta \left( \sum_{j=0}^{K-1} \frac{1}{\min\left( 1, \frac{r_m-j}{T_{avg}} \right)} \right)$. Thus, using~\eqref{eq:Tavg}, the $D_{avg}(n)$ for the MDS-coded caching case employing the random reception strategy and following the random walk mobility model is given by~\eqref{eq:davg_CRW}. This completes the proof of the lemma.
\end{proof} 
Under the random walk mobility model, we now turn to analyzing the main results. By comparing Lemmas~\ref{le:otd} and \ref{le:otdRW}, we observe that the average content transfer delay $D_{avg}(n)$ of the random walk mobility model scales as that of the reshuffling mobility model within a factor of $\log n$. Hence, it is straightforward to achieve essentially the same optimal cache allocation strategies and order-optimal throughput--delay trade-offs for both uncoded and MDS-coded caching scenarios as those in the reshuffling mobility model within a polylogarithmic factor (refer to~\eqref{eq:oprep_s},~\eqref{eq:delay_s2},~\eqref{eq:pnt_s},~\eqref{eq:oprep_c},~\eqref{eq:delay_c2}, and~\eqref{eq:pnt_c} for comparison). This implies that as long as the single-hop-based content delivery protocol is adopted, the random walk mobility model does not fundamentally change the results attained from the reshuffling mobility model.

\section{Concluding Remarks}
This paper investigated the utility of subpacketization in a content-centric mobile ad hoc network, where each mobile node equipping finite-size cache space moves according to the reshuffling mobility model and only a subpacket of a content object consisting of $K$ subpackets can be delivered during one time slot due to the fast mobility condition. The fundamental trade-offs between throughput and delay under our network model were first established by adopting single-hop-based content delivery. Order-optimal caching strategies in terms of throughput--delay trade-offs were then presented for both the sequential reception strategy for uncoded caching and the random reception strategy for MDS-coded caching. In addition, our analytical results were comprehensively validated by numerical evalaution. In consequence, it was found that as $\alpha<2$, the  MDS-coded caching strategy has a significant performance gain over the uncoded caching case. More precisely, it was shown that the per-node throughput for MDS-coded caching scales $K$ times faster than that of uncoded caching when the delay is fixed to a minimum. Moreover, for the MDS-coded caching strategy, if $K$ scales faster than $M$, then the best performance on the throughput and delay is achieved. It was also investigated that adopting the random walk mobility model does not essentially change our main results.

An interesting direction for further research is to characterize the optimal throughput--delay trade-off in mobile hybrid networks employing subpacketization, where both mobile nodes and static helper nodes are able to cache a subset of content objects with different capabilities. Potential avenues of another future research in this area include analyzing the optimal throughput--delay trade-off by adopting mobility models that better reflect human mobility patterns in outdoor settings (e.g., the random waypoint mobility and Levy walk mobilty models).

\appendix
\renewcommand\theequation{\Alph{section}.\arabic{equation}}
\setcounter{equation}{0}
\subsection{Proof of Lemma \ref{le:1_s}} \label{AppendixA_s}
First, note that the optimal cache allocation strategy for $\hat{X}_m$, $m\in\Ic^{(u)}_1$ is trivially non-increasing in Regime I\(^{(u)}\). Thus, in the following, we focus only on Regime II\(^{(u)}\). From the stationary condition in~\eqref{eq:KKT1_s}, we have   
\begin{align} \label{eq:KKT1r2_s} 
- \frac{Kp^{pop}_{m}}{a(n) (\hat{X}_{m})^2} + \hat{\delta} K  + \hat{\sigma}_m =0  \hspace{1cm} m \in \mathcal{I}^{(u)}_{2}.
\end{align} 
From condition~\eqref{eq:KKT2_s}, we have $\hat{\sigma}_m=0$ for $m \in \mathcal{I}^{(u)}_{2}$. Using \eqref{eq:KKT1r2_s}, we have
\begin{align} 
 \hat{\delta} &= \frac{p^{pop}_{m}}{ a(n) \hat{X}^{2}_{m}}, \label{eq:repir22_s} \\
 \hat{X}_{m} &= \sqrt{\frac{p^{pop}_{m}}{ a(n) \hat{\delta}}}, \label{eq:repir2_s} \\
 \hat{\delta}^{\frac{ 1}{2}} &= \frac{\sqrt{p^{pop}_{m}}}{ a(n)^{\frac{1}{2}} \hat{X}_{m} } =   \frac{\sum _{\widetilde{m} \in \mathcal{I}^{(u)}_{2}} \sqrt{p^{pop}_{\widetilde{m}}}}{ \sqrt{a(n)} \sum _{\widetilde{m} \in \mathcal{I}^{(u)}_{2}} \hat{X}_{\widetilde{m}} }. \label{eq:KKT1r2s_s} 
\end{align} 
\noindent By combining \eqref{eq:repir2_s} and \eqref{eq:KKT1r2s_s} we have
\begin{align} \label{eq:opxr2_s} 
\hat{X}_{m} = \frac{\sqrt{p^{pop}_m}}{\sum_{\widetilde{m} \in \mathcal{I}^{(u)}_2} \sqrt{p^{pop}_{\widetilde{m}}} } \sum_{\widetilde{m} \in \mathcal{I}^{(u)}_{2} } \hat{X}_{\widetilde{m}}, \qquad m \in \mathcal{I}^{(u)}_{2}.
\end{align}
Hence, the optimal cache allocation strategy $\left\{\hat{X}_{m}\right\}_{m \in \mathcal{I}^{(u)}_{2}}$ is non-increasing in Regime II\(^{(u)}\).
Now, we are ready to finalize the proof of Lemma \ref{le:1_s}. Consider any content object $j \in \mathcal{I}^{(u)}_{1}$. Then, using~\eqref{eq:KKT1_s} and the fact that $\hat{\sigma}_m = 0$ for $m \in \mathcal{I}^{(u)}_{2}$, we have $\hat{\delta} = \frac{p^{pop}_{j}}{ a(n) \hat{X}^{2}_{j}} -\frac{\hat{\sigma}_j}{K} = \frac{p^{pop}_{m_1^{(u)}}}{ a(n) \hat{X}^{2}_{m^{(u)}_1}}   > 0$. Since  $\hat{X}_{j}= a(n)^{-1}$, $\hat{X}_{m^{(u)}_1}< a(n)^{-1}$, and $\hat{\sigma}_j \geq 0$, we obtain $p^{pop}_j > p^{pop}_{m^{(u)}_1}$, thus resulting in $j < m^{(u)}_1$ due to the feature of a Zipf popularity in \eqref{eq:zipf}. This completes the proof of the Lemma \ref{le:1_s}.
\subsection{Proof of Proposition~\ref{th:oprep_s}}\label{AppendixD_s}
Let us first characterize the content index $m^{(u)}_1$, which identifies the boundary between Regimes I\(^{(u)}\)  and II\(^{(u)}\).  
Since $m^{(u)}_1 \in \mathcal{I}^{(u)}_{2}$ is the smallest index such that $\hat{X}_{m^{(u)}_1} < a(n)^{-1}$, using \eqref{eq:opxr2_s} yields
\begin{align} \label{eq:m1a1_s}
\left(m^{(u)}_1\right)^{-\frac{\alpha}{2}}  < a(n)^{-1} \frac{\sum_{\widetilde{m} = m^{(u)}_1}^{M} \widetilde{m}^{-\frac{\alpha}{2}}}{\sum_{\widetilde{m}= m^{(u)}_1 }^{M} \hat{X}_{\widetilde{m}}}.
\end{align}
\noindent Now, if $m^{(u)}_1 > 1$, then attempting to decrease the index $m^{(u)}_1$ by one and using \eqref{eq:opxr2_s} would result in 
\begin{align} \label{eq:m1a2_s}
\left(m^{(u)}_1-1 \right)^{-\frac{\alpha}{2}}  \geq a(n)^{-1} \frac{\sum_{\widetilde{m} = m^{(u)}_1}^{M} \widetilde{m}^{-\frac{\alpha}{2}}}{\sum_{\widetilde{m}= m^{(u)}_1 }^{M} \hat{X}_{\widetilde{m}}}.
\end{align}
\noindent From \eqref{eq:m1a1_s}, \eqref{eq:m1a2_s}, and $\sum_{\widetilde{m} = m^{(u)}_1 }^{M} K\hat{X}_{\widetilde{m}}= Sn - \sum_{\widetilde{m}=1}^{m^{(u)}_1-1}K \hat{X}_{\widetilde{m}} $ such that the condition \eqref{eq:KKT3_s} is fulfilled, we obtain  
\begin{align}\label{eq:m1h_s} 
\left(m^{(u)}_1 -1\right)^{\frac{\alpha}{2}}  = \Theta \left(  \frac{na(n) - (m^{(u)}_1-1)}{ H_{\frac{\alpha}{2}}(M) - H_{\frac{\alpha}{2}}\left(m^{(u)}_1 -1\right) } \right), 
\end{align}
\noindent where $H_{\frac{\alpha}{2}}(M)$ is given in \eqref{eq:H2}. Using \eqref{eq:m1h_s} and \eqref{eq:H2}, content index $m^{(u)}_1$ is specified as follows: 
\begin{enumerate}
\item For $\alpha >2 $: 
\begin{align} \label{eq:m1_a1} 
m^{(u)}_1  = \Theta \left( \left(na(n) \right)^{\frac{2}{\alpha}} \right);
\end{align}
\item For $\alpha \leq 2 $:
\begin{align}\label{eq:m1_a2} 
m^{(u)}_1  =  \begin{cases}
\Theta \left(M \right) & $if$~a(n)=\Omega\left(\frac{M}{n}\right) \\
\Theta \left(\left(\frac{na(n)}{H_{\frac{\alpha}{2}}(M) }\right)^{\frac{2}{\alpha}}\right) & $otherwise$.
\end{cases}
\end{align}
\end{enumerate}
Now, from Lemma \ref{le:1_s}, the optimal cache allocation strategy is given in \eqref{eq:oprep_s} since $\sum_{\widetilde{m} = m^{(u)}_1 }^{M} \hat{X}_{\widetilde{m}}= n - (m^{(u)}_1-1) a(n)^{-1} $. This completes the proof of Proposition~\ref{th:oprep_s}.
\subsection{Proof of Theorem \ref{th:delay_s}}\label{AppendixE_s}
In Regimes I$^{(u)}$ and II$^{(u)}$, the minimum $D_{avg}(n)$ resulting from the optimal cache allocation strategy $\left\{\hat{X}_{m}\right\}_{m=1}^M$ in \eqref{eq:oprep_s} is given by
\begin{align*}
D_{avg}(n) =  \sum_{m=1}^{m^{(u)}_1-1} K p^{pop}_{m} + \sum_{m= m^{(u)}_1}^{M} \frac{K p^{pop}_{m} }{a(n)\hat{X}_{m}}.
\end{align*}
Substituting for $\hat{X}_{m}$ using \eqref{eq:oprep_s} and for $p^{pop}_{m}$ using~\eqref{eq:zipf}, we obtain
\begin{align}\label{eq:mdav_s}
D_{avg}(\!n\!)\!=\!\frac{KH_\alpha(m^{(u)}_1\!-1)}{H_\alpha(M)}\!+\!\frac{K\!\left(\!H_\frac{\alpha}{2}(M)\!-H_\frac{\alpha}{2}(m^{(u)}_1\!-1\!) \right)^2 }{H_\alpha(M)\left(\! na(n)-m^{(u)}_1\!\right)},
\end{align}
where $H_\alpha(M)$ and $H_\frac{\alpha}{2}(M)$ are given in \eqref{eq:H}  and \eqref{eq:H2}, respectively. When $m^{(u)}_1= \Theta(M)$, the $D_{avg}(n)$ is expressed only as the first term on the RHS of \eqref{eq:mdav_s}, which thus results in $D_{avg}(n)=\Theta(K)$. On the other hand, when $m^{(u)}_1= o(M)$, the first and second terms on the RHS of \eqref{eq:mdav_s} scale as $O(K)$ and $\Theta\left(\frac{K\left(H_\frac{\alpha}{2}(M)\right)^2}{H_\alpha(M) na(n)}\right)$, respectively. Hence, \eqref{eq:delay_s2} holds, which completes the proof of Theorem \ref{th:delay_s}.


\subsection{Proof of Lemma \ref{le:1_c}} \label{AppendixA_c}
We begin our proof by showing that the optimal cache allocation strategy is non-increasing for each regime. First, note that the transfer delay for the content associated with Regime I\(^{(c)}\) is the same in order sense for any value of $r_m =\Omega (a(n)^{-1})$ as shown in \eqref{eq:davgc1}. Thus, we choose $\hat{r}_m= a(n)^{-1}$ for $m \in \mathcal{I}^{(c)}_1$, which is non-increasing. Now, let us focus on Regime II\(^{(c)}\). From the stationary conditions in \eqref{eq:KKT1_c}, we have 
\begin{equation} \label{eq:KKT1r2_c} 
 - \frac{p^{pop}_{m}K}{a(n) \hat{r}^{2}_{m}}+\hat{\delta}-\hat{\sigma}_m +\hat{\mu}_m =0  \hspace{1cm} m \in \mathcal{I}^{(c)}_2.
\end{equation} 
\noindent From conditions~\eqref{eq:KKT2_c} and ~\eqref{eq:KKT6_c}, we have $\hat{\sigma}_m=0$ and $\hat{\mu}_m=0$ for $m \in \mathcal{I}^{(c)}_{2}$. Using \eqref{eq:KKT1r2_c}, we have
\begin{align} \label{eq:opxr2_c} 
\hat{r}_{m} = \frac{\sqrt{p^{pop}_m}}{\sum_{\widetilde{m} \in \mathcal{I}^{(c)}_{2}} \sqrt{p^{pop}_{\widetilde{m}} }} \sum_{\widetilde{m} \in \mathcal{I}^{(c)}_{2} } \hat{r}_{\widetilde{m}}, \hspace{1cm} m \in \mathcal{I}^{(c)}_{2}.
\end{align}
Hence, the optimal cache allocation strategy $\left\{\hat{r}_{m}\right\}_{m \in \mathcal{I}^{(c)}_{2}}$ is non-increasing in Regime II\(^{(c)}\). 
Similarly as in the case of Regime I\(^{(c)}\), the transfer delay for the content associated with Regime III\(^{(c)}\) is the same in order sense for any value of $r_m =o(K^{1+\epsilon})$ and $r_m =\Omega(K)$ as shown in \eqref{eq:davgc3}. Thus, we choose $\hat{r}_m= K$ for $m \in \mathcal{I}^{(c)}_3$, which is non-increasing. 
Now, we are ready to finalize the proof of Lemma \ref{le:1_c}.  Consider any content object $j \in \mathcal{I}^{(c)}_1$. Then, using~\eqref{eq:KKT1_c},~\eqref{eq:dm_r1}, and the fact that $\hat{\sigma}_m = 0$ and $\hat{\mu}_m = 0$ for $m \in \mathcal{I}^{(c)}_1\bigcup \mathcal{I}^{(c)}_2$, we have $\hat{\delta} =  \frac{K p^{pop}_{m^{(c)}_{1}}}{a(n) \hat{r}^{2}_{m^{(c)}_{1}}}= p^{pop}_{j}\left(1- \frac{1}{ a(n) (\hat{r}_{j}-K) }\right)  > 0$. Since  $\hat{r}_{j}= \Theta\left(a(n)^{-1}\right)$ and $\hat{r}_{m^{(c)}_{1}}= o\left(a(n)^{-1}\right)$, we obtain $p_j > p_{m^{(c)}_{1}}$, thus resulting in $j < m^{(c)}_{1}$. 
Now, consider any content object $l \in \mathcal{I}^{(c)}_2$. Then, using~\eqref{eq:KKT1_c},~\eqref{eq:dm_r3}, and the fact that $\hat{\mu}_m = 0$ for $m \in \mathcal{M}$ and $\hat{\sigma}_m = 0$ for $m \in \mathcal{I}^{(c)}_2$, we have $\hat{\delta} = \frac{K p^{pop}_{l}}{a(n) \hat{r}^{2}_{l}}  = \frac{K p^{pop}_{m^{(c)}_{2}}}{a(n) \hat{r}_{m^{(c)}_{2}}\left(\hat{r}_{m^{(c)}_{2}}-K\right)} + \hat{\sigma}_{m^{(c)}_{2}} >0 $. Since $\hat{r}_{m^{(c)}_{2}}= \Theta(K) $, $\hat{r}_{l}= \Omega\left( K^{1+\epsilon}\right)$, and $\hat{\sigma}_{m^{(c)}_{2}} \geq 0$, we obtain $p^{pop}_l > p^{pop}_{m^{(c)}_{2}}$, thus resulting in $l < m^{(c)}_{2}$. Hence, we finally have $ j < m^{(c)}_{1} < m^{(c)}_{2}$. This completes the proof of the Lemma \ref{le:1_c}.
\subsection{Proof of Proposition \ref{th:oprep_c}}\label{AppendixD_c}
Let us first characterize the content indice $m^{(c)}_{1}$ and $m^{(c)}_{2}$, each of which identifies the boundary between regimes. Since $m^{(c)}_2-1 \in  \mathcal{I}^{(c)}_{2}$ is the largest index such that $\hat{r}_{m^{(c)}_{2}-1}= \Omega\left( K^{1+\epsilon}\right)$, using \eqref{eq:opxr2_c} yields
\begin{align} \label{eq:m21_c}
(m^{(c)}_{2})^{-\frac{\alpha}{2}}  = \Theta \left( K^{1+\epsilon} \frac{\sum_{\widetilde{m} = m^{(c)}_{1}}^{m^{(c)}_{2}-1} \widetilde{m}^{-\frac{\alpha}{2}}}{\sum_{\widetilde{m}= m^{(c)}_{1} }^{m^{(c)}_{2}-1} \hat{r}_{\widetilde{m}}} \right).
\end{align}
Here, it follows that $\sum_{\widetilde{m} = m^{(c)}_{1} }^{m^{(c)}_{2}-1} \hat{r}_{\widetilde{m}}= Sn - (m^{(c)}_{1}-1)a(n)^{-1}- (M-m^{(c)}_{2}+1)K $ such that the condition \eqref{eq:KKT3_c} is fulfilled. We thus obtain 
\begin{align}\label{eq:m22_c}
m^{(c)}_{2}=\left(\frac{Sn-(m^{(c)}_{1}-1) a(n)^{-1}- (M-m^{(c)}_{2}+1)K}{K^{1+\epsilon}H_{\frac{\alpha}{2}}(m^{(c)}_{2})} \right)^{\frac{2}{\alpha}}.  
\end{align}
As $m^{(c)}_{1} \in \mathcal{I}^{(c)}_{2}$ is the smallest index such that $\hat{r}_{m^{(c)}_{1}} = o\left(a(n)^{-1}\right)$, using~\eqref{eq:opxr2_c} yields
\begin{align*} 
\left(m^{(c)}_{1} -1\right)^{-\frac{\alpha}{2}}  = \Theta \left( a(n)^{-1} \frac{\sum_{\widetilde{m} = m^{(c)}_{1}}^{m^{(c)}_{2}-1} \widetilde{m}^{-\frac{\alpha}{2}}}{\sum_{\widetilde{m}= m^{(c)}_{1} }^{m^{(c)}_{2}-1} \hat{r}_{\widetilde{m}}} \right).
\end{align*}
\noindent From \eqref{eq:m21_c}, we obtain
\begin{align}\label{eq:m12_c}
m^{(c)}_{1}  = \Theta \left( \left(\frac{K^{1+\epsilon}}{a(n)^{-1}}\right) ^{\frac{2}{\alpha}} m^{(c)}_{2} \right).
\end{align}
Using the fact that $S=\Theta(K)$ and combining \eqref{eq:m22_c} and \eqref{eq:m12_c}, we have
\begin{align}
m^{(c)}_{2}\!\! = \left(\frac{\!\!n - m^{(c)}_{2}\left(K^{1+\epsilon}a(n)\right)^{\frac{2}{\alpha}-1}- (M-m^{(c)}_{2}+1) }{H_{\frac{\alpha}{2}}(m^{(c)}_{2})}\right)^{\frac{2}{\alpha}}.
\end{align}
From Lemma \ref{le:1_c}, the optimized cache allocation strategy is given in \eqref{eq:oprep_c} since $\sum_{\widetilde{m} = m^{(c)}_{1} }^{m^{(c)}_{2}-1} \hat{r}_{\widetilde{m}}= Sn - (m^{(c)}_{1}-1)a(n)^{-1}- (M-m^{(c)}_{2}+1)K $. This completes the proof for Proposition \ref{th:oprep_c}. 
\subsection{Proof of Theorem \ref{th:delay_c}}\label{AppendixE_c}
In Regimes I$^{(c)}$, II$^{(c)}$, and III$^{(c)}$, the minimum $D_{avg}(n)$ resulting from the optimal cache allocation strategy $\left\{\hat{r}_{m}\right\}_{m=1}^M$ in \eqref{eq:oprep_c} is given by
\begin{equation*}  
D_{avg}(n)=\!\!\!\sum_{m=1}^{m^{(c)}_{1}-1}\!\!Kp^{pop}_{m}+\!\!\!\!\sum_{m=m^{(c)}_{1}}^{m^{(c)}_{2}-1}\!\!\!\frac{p^{pop}_{m}K}{a(n)\hat{r}_m}\!+\!\!\!\!\sum_{m = m^{(c)}_{2}}^{M}\frac{p^{pop}_{m}}{a(n)}\log K.
\end{equation*} 
\noindent Substituting for $\hat{r}_{m}$ using \eqref{eq:oprep_c} and for $p^{pop}_{m}$ using~\eqref{eq:zipf}, we obtain
\begin{align}\label{eq:mdav_c}
D_{avg}(n) &=  \frac{K H_{\alpha}(m^{(c)}_{1}-1)}{H_{\alpha}(M)} + \frac{K (H_{\frac{\alpha}{2}}(m^{(c)}_{2}))^2}{H_{\alpha}(M)a(n) S^{(c)}} \nonumber \\&  + \frac{\log K\left( H_{\alpha}(M)- H_{\alpha}(m^{(c)}_{2})\right)}{a(n) H_{\alpha}(M)}.
\end{align}
\noindent where $H_\alpha(M)$ and $H_\frac{\alpha}{2}(M)$ are given in \eqref{eq:H} and \eqref{eq:H2}, respectively. When $m^{(c)}_1= \Theta(M)$, the $D_{avg}(n)$ is expressed only as the first term on the RHS of \eqref{eq:mdav_c}, which thus results in $D_{avg}(n)=\Theta(K)$. When $m^{(c)}_1=o(M)$ and $m^{(c)}_2= \Theta(M)$, the $D_{avg}(n)$ is expressed only as the first and second terms on the RHS of \eqref{eq:mdav_c}, which thus results in $D_{avg}(n)= \Theta\left(\max \left\{K,\frac{\left(H_\frac{\alpha}{2}(M)\right)^2}{H_\alpha(M) na(n)}\right\}\right)$. On the other hand, when $m^{(c)}_2= o(M)$, the $D_{avg}(n)$ scales as $\Theta\!\left(\!\!\max\!\left\{\!\!K,\frac{a(n)^{-1}\!\!\left(\!H_\frac{\alpha}{2}(\!m^{(c)}_2\!)\!\right)^2}{H_\alpha(M) (n-M)}, \frac{\log K}{a(n)}\!\!\right\}\!\!\right)$. This completes the proof of Theorem \ref{th:delay_c}.

\ifCLASSOPTIONcaptionsoff
  \newpage
\fi

\end{document}